\newtheorem{proposition}{Proposition}
\newtheorem{theorem}{Theorem}
\theoremstyle{definition}
\newtheorem{definition}{Definition}
\theoremstyle{remark}
\newcommand{\M}{\mathcal{M}}
\renewcommand{\phi}{\varphi}
\newcommand{\dist}{D}
\newcommand{\lang}{\mathcal{L}}
\newcommand{\atoms}{\mathcal{P}}
\newcommand{\agents}{\mathcal{A}}
\newcommand{\R}[3]{\mathcal{R}_{#1,#2,#3}}
\newcommand{\true}[1]{\llbracket #1 \rrbracket}
\newcommand{\ats}[1]{\ulcorner #1 \urcorner}
\newcommand{\nonlang}{\ensuremath\cap}
\newcommand{\yeslang}{\ensuremath\lang_0}
\newcommand{\single}{\ensuremath\odot}
\newcommand{\multi}{\mathchoice
  {\mathbin{\vcenter{\hbox{\scalebox{1.6}{$\bullet$}}}}} 
  {\mathbin{\vcenter{\hbox{\scalebox{1.6}{$\bullet$}}}}} 
  {\mathbin{\vcenter{\hbox{\scalebox{1.2}{$\bullet$}}}}} 
  {\mathbin{\vcenter{\hbox{\scalebox{1.2}{$\bullet$}}}}} 
}
\newcommand{\simult}{\ensuremath{\Uparrow}}
\newcommand{\seq}{\ensuremath\omega}
\newcommand{\ord}{\ensuremath\Omega}
\newcommand{\some}{\ensuremath\exists}
\newcommand{\all}{\ensuremath\forall}
\newcommand{\na}{\ensuremath\epsilon}
\newcommand{\nlsingle}{{\ensuremath{(\nonlang, \na, \na, \some)}}}
\newcommand{\nlall}{{\ensuremath{(\nonlang, \na, \na, \all)}}}
\newcommand{\formulaseqsingle}{{\ensuremath{(\yeslang, \single, \omega, \some)}}}
\newcommand{\formulaseqall}{{\ensuremath{(\yeslang, \single, \omega, \all)}}}
\newcommand{\formulasimsingle}{{\ensuremath{(\yeslang, \single, \simult, \some)}}}
\newcommand{\formulasimall}{{\ensuremath{(\yeslang, \single, \simult, \all)}}}
\newcommand{\setseqsingleomega}{{\ensuremath{(\yeslang, \multi, \omega, \some)}}}
\newcommand{\setseqallomega}{{\ensuremath{(\yeslang, \multi, \omega, \all)}}}
\newcommand{\setseqsingleOmega}{{\ensuremath{(\yeslang, \multi, \Omega, \some)}}}
\newcommand{\setseqallOmega}{{\ensuremath{(\yeslang, \multi, \Omega, \all)}}}
\newcommand{\setsimsingle}{{\ensuremath{(\yeslang, \multi, \simult, \some)}}}
\newcommand{\setsimall}{{\ensuremath{(\yeslang, \multi, \simult, \all)}}}
 \newcommand{\rev}[1]{{#1}}
\title{Varieties of Distributed Knowledge\thanks{This work was presented at \textit{Advances in Modal Logic} (AiML) 2024.}}
\date{}
\author{Rustam Galimullin\thanks{University of Bergen, Bergen, Norway; \texttt{rustam.galimullin@uib.no}}
\and
Louwe B. Kuijer\thanks{University of Liverpool, Liverpool, U.K.; \texttt{Louwe.Kuijer@liverpool.ac.uk}}
}
\begin{document}

\maketitle
  \begin{abstract}
  Distributed knowledge is one of the better known group knowledge modalities. While its intuitive idea is relatively clear, there is ample room for interpretation of details. 
  We investigate 12 definitions of distributed knowledge that differ from each other in the kinds of information sharing the agents can perform in order to achieve shared 
  \rev{mutual} knowledge of a proposition.
  We then show which kinds of distributed knowledge are equivalent, and  which kinds imply each other, i.e., for any two variants $\tau_1$ and $\tau_2$ of distributed knowledge we show whether a proposition $\phi$ being distributed knowledge under definition $\tau_1$ implies that $\phi$ is distributed knowledge under definition $\tau_2$.

  \textbf{Keywords:} Epistemic Logic; Distributed Knowledge.
    \end{abstract}


\section{Introduction}
\label{sec:intro}
Epistemic logic (see, e.g., \cite{fagin95,meyer_1995}) can be used to describe the knowledge of one or more agents. If multiple agents are involved, one can then study various kinds of group knowledge. On the one hand, we may consider types of group knowledge that are stronger than individual knowledge; for example, one may wonder whether a particular proposition $\phi$ is known by all members of the group, or even whether $\phi$ is so obvious (to the group members) as to be common knowledge. On the other hand, we can also consider a type of group knowledge that is weaker than individual knowledge; even if $\phi$ is not currently known by any individual group member, the group might be able to learn $\phi$ if they combine their information. For example, perhaps agent $a$ knows that $\phi\rightarrow \psi$ and $b$ knows $\phi$. Neither of them knows $\psi$, yet if they pool their knowledge they would be able to get to know it.

This latter kind of group knowledge is typically known as \emph{distributed knowledge} (see, e.g., \cite{halpern90,meyer_1995,fagin95,vanderhoek99,roelofsen07,AgotnesWang2017}). 
Distributed knowledge is a kind of hypothetical knowledge: $\phi$ is distributed knowledge among a group $G$ if the members of $G$ could, if they combined their knowledge, learn that $\phi$ is true. Note that we do not require the agents to actually combine their knowledge in this way, it suffices that they \emph{could} do so and learn $\phi$.

While the general idea of distributed knowledge is reasonably clear, there is no consensus about how to formally define it. Broadly speaking, there have historically been two main approaches.
We will refer to these approaches as the \emph{intersection} approach and the \emph{full communication} approach, with the latter term being derived from \cite{vanderhoek99}. We should note, however, that these are not standardised terms. In fact, even ``distributed knowledge'' is not fully standard, with other terms such as ``group knowledge'', ``collective knowledge'' and ``implicit knowledge'' also being used.

In both approaches, the distributed knowledge of a group $G$ of agents depends on what information the group members possess. In epistemic logic, the information state of an agent $a$ is generally represented by an accessibility relation $\sim_a$, and $a$ knows a formula $\phi$, denoted $\square_a\phi$, in \rev{world} $s$ if and only if $\phi$ is true in every \rev{world} $s'$ such that $s\sim_a s'$.

The intersection approach is the most common one, and is used in \cite{halpern90,meyer_1995,fagin95,vanderhoek99,gerbrandy99,roelofsen07,AgotnesWang2017}, among many others. Here, $\phi$ is distributed knowledge in \rev{world} $s$ if $\phi$ is true in every \rev{world} $s'$ such that $s\sim_G s'$, where $\sim_G = \bigcap_{a\in G}\sim_a$. The intuition behind this approach is that 
the group $G$ is collectively capable of distinguishing between $s$ and $s'$ if any of its members can.

The full communication approach is less common, but still used in many places, including \cite{humberstone85,vanderhoek99,gerbrandy99,roelofsen07}. In this approach, $\phi$ is distributed knowledge among $G$ if and only if the set of formulas known by any of the agents entails $\phi$, i.e., if $\{\psi\in \lang_0\mid \exists a\in G: s\models\square_a\psi\}\models \phi$. In order to avoid circularity we do have to be careful to specify that the known formulas $\psi$ must not 
reference distributed knowledge, i.e. they must be from the basic epistemic logic $\lang_0$.

Observant readers may notice that there is significant overlap between the list of papers using the intersection approach and those using the full communication approach. This is because one of the topics studied has been the relation between the intersection and full communication versions of distributed knowledge. The outcomes of this comparison are that (i) if $\phi$ is full communication distributed knowledge then it is also intersection distributed knowledge, (ii) $\phi$ can be intersection intersection distributed knowledge without being full communication distributed knowledge, and (iii) on certain types of models, the two kinds of distributed knowledge are equivalent.

The reason the two approaches to distributed knowledge persist side by side, albeit with the intersection approach being more popular, is that they appeal to different intuitions. Specifically, the issue is whether agents share information in a way that can be expressed in epistemic logic. Suppose that $a$ considers a \rev{world} $s_1$ possible but $s_2$ impossible, while $b$ considers $s_1$ to be impossible and $s_2$ to be possible, but that $s_1$ and $s_2$ are not distinguishable by any formula of epistemic logic. Can $a$ and $b$, when working together, discover that neither $s_1$ nor $s_2$ is possible?

The full communication approach says ``no, they cannot exclude $s_1$ and $s_2$''. After all, while $a$ does not consider $s_2$ to be possible, there is nothing they can say to $b$ that would communicate this impossibility, and $b$ is likewise incapable of communicating the impossibility of $s_1$.
The intersection approach, on the other hand, says ``yes, they can exclude $s_1$ and $s_2$''. Even if neither of them can express the difference (in epistemic logic, at least), $a$ knows that $s_2$ is impossible and $b$ knows that $s_1$ is impossible, so together they know both are impossible. Perhaps they communicate this impossibility to each other in a language other than epistemic logic, such as first-order logic. Perhaps they simply point at the \rev{worlds} they consider impossible. Perhaps they perform a Vulcan mind-meld, or somehow merge their databases or neural networks. What matters, to the intersection approach, is not how the agents share their information, but only that the agents possess the required information.

In this paper, we will not try to settle the debate in favour of one variant. On the contrary, we will introduce several further variants of distributed knowledge. This is because, in addition to the form of information sharing (formulas or mind-meld) which makes the difference between the intersection and full communication variants, there are several more questions one can ask about how distributed knowledge is established.



How much information do the agents share? Do they share information simultaneously, or is there an order? Are all agents required to know that $\phi$ is true after the knowledge sharing, or does it suffice if one agent knows $\phi$?

Different answers to these questions may lead to different notions of distributed knowledge. That is not to say that all possible combinations lead to different kinds of distributed knowledge. For example, suppose that each agent shares a single proposition known to that agent. Then it does not matter whether the agents share simultaneously or in order (Proposition~\ref{prop:finite_sequence}), and if one of the agents can learn $\phi$ then all of them can learn it (Proposition~\ref{prop:finite_singleall}). But in other cases, the difference does matter.

We will consider 12 possible definitions of distributed knowledge. One of these, which we label $\nlall$, is the intersection definition of distributed knowledge. None of our definitions is exactly the same as the full communication definition, but the variant that we label $\formulasimall$ is equivalent to full communication (Proposition \ref{prop:connections}).\footnote{Furthermore, several other variants are equivalent to $\formulasimall$, and therefore, by transitivity, also equivalent to full communication.} \rev{To the best of our knowledge, other definitions of distributed knowledge in our taxonomy have not been considered in the literature before.} 

After introducing the basic technical definitions in Section \ref{sec:definitions},  we will define all variants of distributed knowledge in  Section~\ref{sec:variants} and compare them to the existing approaches. Then we compare the introduced definitions of distributed knowledge to each other in Section \ref{sec:expressivity}. Finally, we conclude and outline the directions for further research in Section \ref{sec:conclusion}. 


\section{Basic Definitions}
\label{sec:definitions}
Each of the variants of distributed knowledge that we consider will use the same language, which is basic epistemic logic with an additional operator $D_G$, indicating distributed knowledge among group $G$.
\begin{definition}
Let $\atoms$ be a countable set of propositional atoms and $\agents$ a finite set of agents. The language $\lang$ is given by the following normal form
\begin{equation*}
\lang \ni \phi ::= p \mid \neg \phi \mid (\phi \vee \phi) \mid \square_a\phi\mid D_G \phi
\end{equation*}
where $p\in \atoms$, $a\in \agents$ and \rev{$\emptyset \neq G \subseteq \agents$}. We denote the fragment of $\lang$ that does not contain $D_G$ by $\lang_0$. We omit parentheses where this should not cause confusion.

We use $\wedge, \rightarrow, \leftrightarrow$ and $\lozenge_a$ in the usual way as abbreviations. Similarly, we use $\bigwedge$ and $\bigvee$ for $n$-ary conjunction and disjunction, respectively. 
\end{definition}

Because we are describing (distributed) \emph{knowledge}, we will use S5 models. We should stress, however, that our results also hold for K models.
\begin{definition}
A \emph{model} $\M$ is a tuple $(S,\sim,V)$, where $S$ is a non-empty set of \rev{\emph{worlds}}, $\sim:\agents \rightarrow 2^{S\times S}$ assigns to each agent $a\in \agents$ an equivalence relation $\sim_a \subseteq S\times S$ and $V:\atoms \rightarrow 2^S$ is the valuation function. If necessary, we will refer to the elements of the tuple as $S^\M$, $\sim^\M$, and $V^\M$.
A \emph{pointed model} is a pair $\M,s$ where $s$ is a \rev{world} of $\M$. 
\end{definition}

All operators other than $D_G$ are given their normal semantics. The semantics of $\lang_0$ is therefore as follows.
\begin{definition}
Let $\M=(S,\sim,V)$ be a model and $s\in S$. Then

\begin{tabular}{lll}
$\M,s\models p$& iff & $s\in V(p)$\\
$\M,s\models \neg \phi$ & iff &$\M,s\not\models \phi$\\
$\M,s\models \phi \vee \psi$ & iff & $\M,s\models \phi$ or $\M,s\models \psi$\\
$\M,s\models \square_a\phi$ & iff & $\M,t\models \phi$ for all $t$ such that $s\sim_a t$.
\end{tabular}
\end{definition}
For $D_G$, the semantics will depend on the type of distributed knowledge under consideration, which we discuss in the next section.

In several of the proofs throughout this paper, we will make use of the concept of $Q$-bisimilarity, where $Q\subseteq \atoms$. 

\begin{definition}
Let $Q \subseteq \atoms$, and $\M = (S^\M, \sim^\M, V^\M)$ and $\mathcal{N} = (S^\mathcal{N}, \sim^\mathcal{N},$ $V^\mathcal{N})$ be models. We say that $\M$ and $\mathcal{N}$ are $Q$-\emph{bisimilar} (denoted $\M \approx_Q \mathcal{N}$) if there is a non-empty relation $B \subseteq S^\M \times S^\mathcal{N}$, called $Q$-\emph{bisimulation}, such that for all $B(s,t)$, the following conditions are satisfied:
\begin{description}
\item[\textbf{Atoms}] for all $p \in Q$: $s \in V^\M(p)$ if and only if $t \in V^\mathcal{N}(p)$,
\item[\textbf{Forth}] for all $a \in \agents$ and  $u \in S^\M$ such that $s \sim^\M_a u$, there is a $v \in S^\mathcal{N}$ such that $t \sim^\mathcal{N}_a v$ and $B(u,v)$,
\item[\textbf{Back}] for all $a \in \agents$ and $v \in S^\mathcal{N}$ such that $t \sim^\mathcal{N}_a v$, there is a $u \in S^\M$ such that $s \sim^\M_a u$ and $B(u,v)$. 
\end{description} 
We say that $\M,s$ and $\mathcal{N},t$ are $Q$-bisimilar and denote this by $\M,s \approx_Q \mathcal{N},t$ if there is a $Q$-bisimulation linking \rev{worlds} $s$ and $t$.
\end{definition}

In the paper, we will make use of the classic result that bisimilar models satisfy the same formulas of epistemic logic. 

\begin{theorem}
\label{thm:equiv}
Given $\M,s$ and $\mathcal{N},t$, if $\M,s \approx_Q \mathcal{N},t$, then for all  $\varphi \in \lang_0$ that include atoms only from $Q$, we have that $\M,s \models \varphi$ if and only if $\mathcal{N},t \models \varphi$.
\end{theorem}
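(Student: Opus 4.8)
The plan is to prove the statement by induction on the structure of $\varphi$. Fix a $Q$-bisimulation $B$ with $B(s,t)$. It is convenient to strengthen the statement being proved by induction: for every $\varphi \in \lang_0$ all of whose atoms lie in $Q$, and for \emph{every} pair $(s',t')$ with $B(s',t')$, we have $\M, s' \models \varphi$ if and only if $\mathcal{N}, t' \models \varphi$. The theorem is then the special case $(s',t') = (s,t)$. Phrasing the induction hypothesis over all $B$-linked pairs, rather than just the distinguished one, is the only subtlety, and it is exactly what makes the modal step go through.

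For the base case $\varphi = p$ with $p \in Q$, the equivalence is precisely the \textbf{Atoms} clause applied to $B(s',t')$. The Boolean cases $\varphi = \neg \psi$ and $\varphi = \psi_1 \vee \psi_2$ follow immediately from the induction hypothesis, since the semantic clauses for $\neg$ and $\vee$ depend only on the truth values of the immediate subformulas at the same world, and those subformulas again contain atoms only from $Q$.

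The case $\varphi = \square_a \psi$ is where the back-and-forth conditions are used. Assume $B(s',t')$ and $\M, s' \models \square_a \psi$; I want $\mathcal{N}, v \models \psi$ for every $v$ with $t' \sim^{\mathcal{N}}_a v$. Given such a $v$, the \textbf{Back} clause supplies a $u \in S^\M$ with $s' \sim^\M_a u$ and $B(u,v)$; then $\M, u \models \psi$ by the assumption, and since $\psi$ uses only atoms from $Q$ and $B(u,v)$ holds, the induction hypothesis gives $\mathcal{N}, v \models \psi$. As $v$ was arbitrary, $\mathcal{N}, t' \models \square_a \psi$. The converse implication is symmetric, invoking \textbf{Forth} in place of \textbf{Back}. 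No step presents a real obstacle; the proof is the textbook bisimulation-invariance argument, restricted to the atoms in $Q$.
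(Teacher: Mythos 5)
Your proof is correct: it is the standard bisimulation-invariance argument by induction on $\varphi$, with the induction hypothesis quantified over all $B$-linked pairs so that the \textbf{Back}/\textbf{Forth} clauses can be applied in the $\square_a$ case. The paper itself omits the proof, citing the result as classical, and your argument is exactly the intended one.
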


\section{Varieties of distributed knowledge}
\label{sec:variants}
In Section~\ref{sec:intro} we mentioned a number of questions regarding the exact workings of distributed knowledge. Here we discuss these questions in more detail, and use the potential answers to define types of distributed knowledge. Before we get into these details, however, we should remark on one aspect of distributed knowledge that will hold for every variant, namely that distributed knowledge is backward-looking.

In both the intersection and full communication definitions of distributed knowledge,
a proposition $\phi$ is distributed knowledge among $G$ if, by combining their knowledge, $G$ can discover that $\phi$ \emph{was} true \emph{before} they combined their knowledge.\footnote{See \cite{AgotnesWang2017} for a more thorough discussion of this aspect of distributed knowledge.} In this paper we also follow this tradition. The past tense is important because $\phi$ may contain claims that certain group members are ignorant of some fact, and this ignorance can be broken when agents in $G$ share their knowledge.  

For example, let $\phi$ be the famous Moore-sentence $p \wedge \neg\square_a p$, i.e., ``$p$ is true but $a$ does not know that $p$ is true.'' 
This sentence cannot be known by agent $a$, yet it can be distributed knowledge between $a$ and $b$. Perhaps $a$ knows that $\neg \square_ap$ while $b$ knows that $p$. When $a$ and $b$ combine their knowledge, they will learn that $p\wedge \neg \square_ap$ used to be true, but that very same communication will render the formula false, since $a$ will learn that $p$ is true. 

Because of this backward looking nature, distributed knowledge is a \emph{static} operator, as opposed to the \emph{dynamic} operators from dynamic epistemic logic (DEL) \cite{del}. A dynamic take on distributed knowledge is also possible, and would likely correspond to what agents may learn through communication with each other\footnote{As with static distributed knowledge, we need to account for many small but important implementation decisions, for example the extent to which agents that are not part of the group will be aware of the discussion among the group members.}. Some of the known approaches include a single agent sharing all her information with everyone \cite{Baltag2010slides}, a group of agents sharing everything they know among themselves \cite{AgotnesWang2017,baltag20}, topic-based communication within a group of agent\rev{s} \cite{galimullin23}, and various forms of public communication by agents and their effects \cite{agotnes10,agotnes22,galimullin23b}.
While such dynamic treatment of distributed knowledge 
is interesting, it is outside the scope of this contribution.

We now continue with a detailed discussion of each of our questions regarding the meaning of distributed knowledge.

\paragraph{Forms of information}
The first important question is the form of information shared by the group members in their attempt to establish knowledge of a proposition.
We consider two answers to this question. Firstly, agents may be able to share \emph{formulas of $\lang_0$ that they know}. So if $\square_a\psi$ holds, for some $\psi\in \lang_0$ and $a\in G$, then in a group discussion among $G$, agent $a$ can contribute $\psi$. The restriction to formulas that agents actually know arises from the intuition that distributed knowledge is about combining individual \textit{knowledge} rather than arbitrary formulas.  

The alternative is that the agents may be sharing information in a way that is either entirely non-lingual, or at least phrased in a language stronger than $\lang_0$. Importantly, such information sharing is not bound to respect bisimilarity in the sense of Theorem \ref{thm:equiv} (see, e.g., \cite{roelofsen07}).

Note that we restrict $\psi$ to $\lang_0$, so the formulas that the agents can share in their deliberation cannot include the $D_G$ operator. 
This is required in order to avoid vicious circularity; if we allow the $D_G$ operator to be used during the deliberations, there are situations where agents can learn $\phi$ if they combine their knowledge, but only if that knowledge includes the fact that $\phi$ is distributed knowledge. Hence $\phi$ would be distributed knowledge if and only if\dots{} $\phi$ is distributed knowledge. As a result of this circularity, the semantics would become underdetermined, i.e., there would be pointed models where both $D_G\phi$ and $\neg D_G\phi$ are consistent with the semantics (see Appendix \ref{app:vicious}).

Recall from the introduction that the intersection definition of distributed knowledge assumes the non-lingual answer to this question, whereas the full communication definition assumes the information being communicated is in the form of $\lang_0$ formulas.


\paragraph{Amount of information}
The next question is how much information the agents share. In principle, any measure could be used here. For example, one could imagine a situation where each agent has, say, 5 seconds to contribute their share. Or perhaps agents are limited to statements of a given maximum complexity.

Here, however, we will restrict ourselves to a coarser distinction: agents will be able to share either a single formula, or an infinite set of formulas. Note that since we are limiting only the amount of formulas, not their complexity, it would not make sense to restrict to a given finite number of formulas, 
since any finite number of formulas can be combined into one using conjunctions. 

Note that this distinction only makes sense if information is shared as $\lang_0$ formulas; if information is shared non-linguistically we do not have a sensible measure of the amount of information shared.


\paragraph{Order and turn-taking}
Another consideration is whether the agents share all information simultaneously, or in some order (with agents taking a single turn if they share one formula, or multiple turns if they share a set of formulas). This distinction is relevant because agents can only share formulas that they know; if the agents share their information in some order, then the later agents may be able to contribute some formulas that they did not know initially but that they have come to know based on the information provided by the agents before them.
Again, this distinction only works if agents share their information in the form of formulas.

\paragraph{Collective or individual success}
Finally, we can distinguish between a type of distributed knowledge where all group members need to learn the truth of a formula and a type where only one group member needs to learn it. In other words, if there is a possible communication between $a$ and $b$ that would result in $a$ knowing $\phi$ while $b$ remains ignorant of it, would $\phi$ be distributed knowledge?

\paragraph{Equivalence among variants}
Not all of the combinations of answers to questions from the previous section make sense. 
Still, the answers allow us to define 12 variations of distributed knowledge. 

We should stress, however, that not all these variations are truly different. For example, suppose that, in their communication, every agent shares a single formula, and they do so simultaneously. Then it is possible for the agents to communicate in such a way that a single agent learns $\phi$ if and only if it is possible for them to communicate in a way where all agents learn $\phi$ (see Proposition~\ref{prop:single}).

In fact, our main contributions in this paper are (1) formal definitions of the various kinds of distributed knowledge, and (2) the results on which of the variants are equivalent to each other.



\subsection{Semantics for distributed knowledge}

In our taxonomy, a type of distributed knowledge can be identified by the form of information shared, the amount of information, whether there is an order, and how many agents need to learn the target formula. A type $\tau$ is therefore a tuple $\tau = (f,a,o,q)$, where $f\in \{\nonlang, \yeslang\}$ indicates whether the information is presented in the form of formulas ($\lang_0$) or not ($\cap$), $a\in \{\single, \multi, \na\}$ indicates whether a single formula is shared ($\single$) or a set of formulas ($\multi$), while $a = \na$ is used for the case where $f=\nonlang$ and therefore no formulas are shared at all. The parameter $o\in \{\simult, \seq, \ord, \na\}$ indicates whether the agents share their knowledge simultaneously ($\simult$), in a sequence with a length $\alpha$ bounded by the first infinite ordinal ($\seq$), in a sequence that can have any ordinal $\alpha$ as its length ($\ord$), or whether $f=\nonlang$ and therefore the question of an order doesn't make sense ($\na$). Finally, $q\in \{\some, \all\}$ indicates whether at least one agent must learn $\phi$ (\some) or all of them must learn it ($\all$).


We use $f(\tau)$, $a(\tau)$, $o(\tau)$ and $q(\tau)$ to denote the values of $f$, $a$, $o$ and $q$, respectively, in $\tau$.
Each type $\tau$ of distributed knowledge induces semantics for the language $\lang$, which we denote by $\models_\tau$.

The main idea of each of the semantics is that $\phi$ is distributed knowledge among $G$ if there is some way for $G$ to share information among themselves that would result in them learning the truth of $\phi$. Communication by $G$ will change the current information state, which is encoded by the set $\sim$ of relations, into a new information state $\sim'$. Often there are different things that $G$ could communicate, and each such possible communication will lead to a new information state. Hence we will, in general, need to consider not one new information state $\sim'$ but a set of such information states. 

We will denote the set of information states that can be reached by group $G$, when discussing in \rev{world} $s$, using the communication type $\tau$, as $\R{G}{s}{\tau}$.

The semantics for the distributed knowledge operator are then given by $\M,s\models_\tau D_G\phi$ iff
\begin{center}$\exists \sim'\in \R{G}{s}{\tau} \exists a\in G: \M,t\models \phi$ for all $t$ such that  $s\sim'_a t$\end{center}
if $q(\tau)=\some$, or
$\M,s\models_\tau D_G\phi$  iff  
\begin{center}$\exists \sim'\in \R{G}{s}{\tau} \forall a\in G: \M,t\models \phi$ for all $t$ such that  $s\sim'_a t$\end{center}
if $q(\tau)=\all$.

We should note that, while the agents are generally not \emph{required} to share as much information as they can, distributed knowledge is about whether the agents are able to achieve knowledge of $\phi$. The more information is shared, the more likely it is that the agents will learn $\phi$.\footnote{This does rely on the fact that we are looking at \emph{static} communication, i.e. $\phi$ is distributed knowledge if the agents can learn that $\phi$ \emph{used to be} true. In \emph{dynamic} communication, where agents are trying to learn that $\phi$ \emph{is} true, sharing as much information as possible may not be an optimal strategy since $\phi$ can contain ignorance conditions that become false when more information is shared (e.g. the Moore formula).} In particular, if there is a unique ``maximal communication'', it suffices to consider only that communication.

For example, in the intersection definition of distributed knowledge, agents are capable of explaining, in a non-linguistic way, exactly which \rev{worlds} they consider possible. Conceptually, it seems reasonable that agents could, instead of communicating their exact set of possible \rev{worlds}, communicate a superset of it. We need not consider this possibility, however, since sharing the exact set of \rev{worlds} they consider possible is the optimal strategy. In this case, it therefore suffices to consider the singleton set $\R{G}{s}{\tau}=\{\sim'\}$ where, for every $a\in G$, $\sim'_a= \bigcap_{b\in G}\sim_b$.

In other cases there may be no single most informative communication, so we cannot restrict ourselves to a single information state in this way. For example, if every agent can communicate a single formula that is known to them, there is not, in general, a single most informative formula for them to state. For every formula $\psi_G = \bigwedge_{a\in G}\psi_a$ with the property that $\M,s\models \square_a\psi_a$ for every $a$, we therefore need to consider the information state $\sim^{\psi_G}$.

Based on the considerations from the previous section, we can define the following 12 variants of distributed knowledge \rev{(see Figure \ref{fig:expressivity} for the full list of variants and their relative strength)}. Recall, however, that some of these variants are equivalent to one another.



 \paragraph{Non-linguistic sharing} Suppose information is shared non-linguistically. Then our method of restricting the amount of information shared is inapplicable. Furthermore, because we do not know how information is shared we also cannot speak of an ordering in which information is presented.
	
	The only further distinction that is available is whether one agent needs to learn the formula or all of them do. We therefore need to consider the variants $\tau = \nlsingle$ and $\tau = \nlall$, respectively.
	
	As stated above, for either of these cases we have $\R{G}{s}{\alpha}=\{\sim'\}$ where $\sim'_a=\bigcap_{b\in G}\sim_b$. Note, also, that at this point it is already easy to see that $\nlsingle$ and $\nlall$ are equivalent. This is because all agents end up with the same accessibility relation, so if one of them learns $\phi$ then they all do. 

 \begin{proposition}
$\M,s\models_{\nlsingle{}} D_G\phi$ if and only if $\M,s\models_{\nlall{}} D_G\phi$.
 \end{proposition}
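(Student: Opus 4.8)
The plan is to unpack the two semantic clauses for $\models_\tau$ when $\tau$ is one of the non-linguistic types and observe that in both cases the relevant set of reachable information states is the same singleton. Concretely, by the definition of $\R{G}{s}{\tau}$ for $\tau \in \{\nlsingle, \nlall\}$, we have $\R{G}{s}{\nlsingle} = \R{G}{s}{\nlall} = \{\sim'\}$, where $\sim'_a = \bigcap_{b\in G}\sim_b$ for every $a \in G$. The crucial structural fact is that this common relation $\sim'_a$ does not depend on $a$: after the (non-linguistic) communication every agent in $G$ holds the very same accessibility relation, call it $R := \bigcap_{b\in G}\sim_b$.

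Given this, the argument is essentially immediate. Unfolding the $q(\tau)=\some$ clause, $\M,s\models_{\nlsingle} D_G\phi$ holds iff there is some $\sim' \in \R{G}{s}{\nlsingle}$ and some $a \in G$ with $\M,t\models\phi$ for all $t$ with $s\sim'_a t$; since the only available $\sim'$ assigns $R$ to every agent, this reduces to: $\M,t\models\phi$ for all $t$ with $sRt$. Unfolding the $q(\tau)=\all$ clause, $\M,s\models_{\nlall} D_G\phi$ holds iff there is some $\sim'\in\R{G}{s}{\nlall}$ such that for all $a\in G$, $\M,t\models\phi$ for all $t$ with $s\sim'_a t$; again the only $\sim'$ is the one assigning $R$ to every agent, and since $G\neq\emptyset$ (so there is at least one agent to witness the $\exists$ direction and the universal over $a$ is non-vacuous), this also reduces to: $\M,t\models\phi$ for all $t$ with $sRt$. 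The two conditions are literally the same, so the biconditional follows.

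I would present this as a short two-line chain of equivalences, being explicit about the two points that make it work: (i) $\R{G}{s}{\nlsingle}=\R{G}{s}{\nlall}$ is a singleton, and (ii) within that singleton the agent-indexed relations are all equal. There is no real obstacle here; the only thing to be slightly careful about is invoking $G\neq\emptyset$ so that the $\exists a \in G$ in the $\some$-clause and the "for all $a\in G$" in the $\all$-clause both collapse correctly to the single condition on $R$ — otherwise one might worry the $\all$-clause is vacuously satisfied or the $\some$-clause has no witness. Since the language definition stipulates $\emptyset \neq G \subseteq \agents$, this is guaranteed, and the proof is complete.
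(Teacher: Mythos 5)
Your proof is correct and follows essentially the same route as the paper: both arguments observe that $\R{G}{s}{\tau}$ is the same singleton for the two non-linguistic types and that every agent in $G$ ends up with the identical relation $\bigcap_{b\in G}\sim_b$, so the $\some$- and $\all$-clauses collapse to the same condition. Your extra remark about $G\neq\emptyset$ is a sensible (if minor) precaution that the paper leaves implicit.
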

 \paragraph{Simultaneous sharing of formulas} Suppose the information sharing happens by the agents stating one or more formulas each, and that this happens simultaneously.
    So we are considering the variants $\tau \in \{\formulasimsingle,$ $\formulasimall,$ $\setsimsingle,$ $\setsimall\}$,
    where the first two assume that each agent contributes a single formula while the last two let each agent contribute a (potentially infinite) set of formulas.

    We specify which information states the agents can achieve by sharing information in two steps. First, we specify all of the ways the agents could share information. In effect, this acts as a set of indices used to identify the various outcome information states. Then, for each index we specify what that outcome information state is.

    The important condition on information sharing is that each agent must contribute one or more formulas that they know.\footnote{Note that we can assume without loss of generality that every agent shares at least one formula because agents can always use the uninformative formula $\top$ that is known by everyone.} Hence if $\tau = \formulasimsingle$ or $\tau = \formulasimall$ then 
    $$\R{G}{s}{\tau}=\{\sim^{\{\psi_a\mid a\in G\}}\mid \forall a\in G: \psi_a\in \lang_0 \text{ and }\M,s\models \square_a\psi_a\}$$
    and
    $$\sim^{\{\psi_a\mid a\in G\}}_\rev{b} = \{(x,y)\in S\times S\mid (x,y)\in \sim_\rev{b} \text{ and } \M,y\models \bigwedge_{a\in G} \psi_a\}.$$

    If each agent shares a set of formulas, i.e. if $\tau = \setsimsingle$ or $\tau = \setsimall$, then they need to know each of the formulas they provide, so
	\begin{align*}
	\R{G}{s}{\tau} = \{\sim^{\{\Psi_a\mid a\in G\}}\mid \forall a\in G:\Psi_a\subseteq \lang_0 \text{ and } \forall \psi \in \Psi_a: \M,s\models \square_a\psi\}
	\end{align*}
	and
	$$\sim^{\{\Psi_a\mid a\in G\}}_\rev{b} = \{(x,y)\in S\times S\mid (x,y)\in \sim_\rev{b} \text{ and } \forall a \forall \psi\in \Psi_a:\M,y\models\psi\}.$$

	
	
 \paragraph{Taking turns} Suppose that, as in the previous case, agents share one or more formulas, but now they do so sequentially. The crucial difference with simultaneous communication is that in the sequential case agents can state a formula that they only know because of the information provided to them by the previous speakers. 
 
    Assume, for example, that $a$ knows $p$ and $b$ knows $p\rightarrow q$. If $a$ speaks first and tells $b$ that $p$ is true, $b$ can then, when it is their turn to speak, say that $q$ is true, which they only know because $a$ told them $p$.

	If every agent shares exactly one formula, then the sequential sharing of information means every agent takes a single turn, where the later agents can use the information provided by the earlier ones.

    If each agent provides a set of formulas, we still need to specify the order among the agents, but this will generally have to be an infinite order. Note that we do not have to assume that this order is ``fair'', since agents can skip their turn by providing the trivial formula $\top$. What is potentially important, however, is whether the turn-taking is limited to $\omega$ rounds (where $\omega$ is the first infinite ordinal), or whether any ordinal can be used. 
    This gives us six variants: $\tau \in \{\formulaseqsingle, \formulaseqall, \setseqsingleomega,$ $\setseqallomega,$ $\setseqsingleOmega,$ $\setseqallOmega\}$. \rev{Observe that we do not consider types $(\yeslang, \single, \Omega, \forall)$ and $(\yeslang, \single, \Omega, \exists)$ since a finite number of agents communicating one formula each will never step on the trans-finite territory. }

    Before we can define the possible effects of sequential communication, we first need a little bit more notation. We want to consider finite sequences, infinite sequences, and even trans-finite sequences of statements. Therefore, let $\alpha$ be any ordinal. At each ordinal $\delta< \alpha$, one of the agents will state the truth of one formula; let us write $f(\delta)$ for the agent and $g(\delta)$ for the formula. As $\alpha$ can be considered to be identical to the set of all ordinals less than 
    it, this means $f$ and $g$ are functions  of type $f:\alpha\rightarrow G$ and $g:\alpha\rightarrow \lang_0$.
	
	Agent $f(\delta)$ needs to know formula $g(\delta)$ at time $\delta$, since otherwise they would not be able to state the truth of the formula. As such, we need to keep track of the information state at each point in the process. Formally, this means that we are interested in the final information state $\sim^{\alpha,f,g}$, but we also need to define $\sim^{\alpha,f,g,\delta}$ for $\delta <\alpha$, which represents the information state immediately after the announcement that takes place at time $\delta$. We do this by defining 
	$$\sim^{\alpha,f,g,0}_a = \{(x,y)\in S\times S \mid (x,y)\in \sim_a \text{ and } \M,y\models g(0)\}$$
	and
	$$\sim^{\alpha,f,g,\delta}_a = \{(x,y)\in S\times S\mid (x,y)\in \bigcap_{\epsilon<\delta}\sim^{\alpha,f,g,\epsilon}_a \text{ and } \M,y\models g(\delta)\}$$
	for $0 < \delta < \alpha$. 
 Finally, we define $\sim^{\alpha,f,g}_a = \bigcap_{\delta<\alpha}\sim_a^{\alpha,f,g,\delta}$.

    Note that $\sim^{\alpha,f,g,\delta}$ is the information state \emph{after} the communication at time $\delta$, and that there is such a communication at every time $\delta < \alpha$. Hence $\sim^{\alpha,f,g,0}_a$ is generally not identical to $\sim_a$, since the latter represents the information state before any communication takes place.

    Furthermore, communication only happens at $\delta < \alpha$. Hence, in particular, if $\alpha = \omega$ then communication takes place at every finite time step, but there is no ``infinity-th'' communication at $\omega$.

    Importantly, the definition of $\sim^{\alpha,f,g}$ does not check whether agent $f(\delta)$ actually knows $g(\delta)$ at time $\delta$.
    So while the definition determines the effect that a given communication sequence would have, it does not determine whether the agents are actually capable of saying the formulas included in the sequence. For this, we need to \rev{look at another property that} we will refer to as \emph{correctness}. 
%
%
%
%
%
%
%
	We say that $\alpha$, $f$ and $g$ are \emph{correct} for group $G$ and \rev{world} $s$ if
	$$\forall s': \text{ if } (s,s')\in \sim_{f(\delta)} \cap \bigcap_{\epsilon < \delta}\sim^{\alpha,f,g,\epsilon}_{f(\delta)}, \text{ then }\M,s'\models g(\delta).$$
	In other words, if $s'$ was accessible originally (for agent $f(\delta)$) and has not been excluded by any of the preceding statements at times $\epsilon < \delta$, then $g(\delta)$ must be true in $s'$.
		
	Now, we can formally define $\R{G}{s}{\tau}$ for the sequential types of communication. If $\tau = \formulaseqsingle$ or $\tau = \formulaseqall$, then we require every agent to have exactly one turn, and hence $\alpha = |G|$. Moreover, $f$ is a bijection, so
	$$\R{G}{s}{\tau} = \{\sim^{|G|,f,g}\mid f \text{ is a bijection and } |G|, f \text{ and } g \text{ are correct for } G, s\}.$$
 
	If $\tau = \setseqsingleomega$ or $\tau = \setseqallomega$, then we take $\alpha = \omega$, i.e., we allow infinite statements, but there is no ``infinity-th statement''. We could demand that the turn-taking by the agents is ``fair'' in some way, but that is pointless; if there is an unfair turn-taking the agents could use for their communication, this can be transformed into a fair one where some agents give the trivial statement $\top$. As such, we get
    $$\R{G}{s}{\tau} = \{\sim^{\omega,f,g}\mid \omega, f \text{ and } g \text{ are correct for } G, s\}.$$
    Finally, we can allow any ordinal number $\alpha$ of statements. If $\tau = \setseqsingleOmega$ or $\tau = \setseqallOmega$, then
    $$\R{G}{s}{\tau} = \{\sim^{\alpha,f,g}\mid \alpha, f \text{ and } g \text{ are correct for } G, s\}.$$
    \rev{Note that all $\R{G}{s}{\tau}$'s are sets, and we essentially quantify over all possible sequences of announcements, and, according to the definition of distributed knowledge, it is enough that at least one knowledge state induced by any sequence satisfies $\varphi$.}

\subsection{Connections to the traditional definitions}
\label{sec:connections}
Our variant $\nlall$ of distributed knowledge is, modulo some notation, identical to the traditional definition of distributed knowledge based on intersection. Our variant that most closely matches the full communication definition of distributed knowledge is $\formulasimall$.

In fact, as mentioned in the introduction, $\formulasimall$ is equivalent to the full communication definition, but this is not entirely obvious and therefore requires a short proof.

\begin{proposition}
\label{prop:connections}
We have $$\M,s\models_\formulasimall D_G\phi$$ if and only if 
$$\{\psi\in \lang_0\mid \exists a\in G: \M,s\models \square_a\psi\}\models \phi.$$
\end{proposition}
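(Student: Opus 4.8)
The plan is to prove the two directions separately, matching the existential information state $\sim'$ produced by simultaneous single-formula sharing against the deductive closure of the agents' individual knowledge. Write $\Gamma = \{\psi \in \lang_0 \mid \exists a \in G : \M,s \models \square_a\psi\}$ for the pooled knowledge set, and note that for the semantics $\models_\formulasimall$ the relevant information states are exactly the $\sim^{\{\psi_a \mid a \in G\}}$ where each $\psi_a \in \lang_0$ is known by $a$ at $s$, with every agent in $G$ ending up with the same relation $\sim^{\{\psi_a \mid a \in G\}}_b = \{(x,y) \in \sim_b \mid \M,y \models \bigwedge_{a\in G}\psi_a\}$ (restricted to the surviving worlds).

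For the direction from $\models_\formulasimall D_G\phi$ to $\Gamma \models \phi$: from the semantics there is a choice of formulas $\psi_a \in \lang_0$, with $\M,s \models \square_a\psi_a$ for each $a \in G$, and some agent $b \in G$ such that $\phi$ holds at every world $t$ with $s \sim^{\{\psi_a\mid a\in G\}}_b t$. I would first observe that each $\psi_a \in \Gamma$, hence the conjunction $\chi := \bigwedge_{a\in G}\psi_a$ is (classically) entailed by $\Gamma$. Then I need to show $\{\chi\} \cup \{\,\lozenge_b\text{-type facts known at }s\,\} \models \phi$ — more carefully, I want to argue that any world satisfying $\Gamma$ can be realized as an accessible world $t$ in the restricted relation. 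The cleanest route is: take any pointed model $\N,u$ with $\N,u \models \Gamma$; I must show $\N,u \models \phi$. Build a witness by taking the disjoint union $\M \uplus \N$, or better, reason directly: since $\square_b\psi_b$ holds at $s$, and more generally $s \models \square_b\psi$ for every $\psi \in \lang_0$ that $b$ knows, the set of worlds $\{t \mid s\sim_b t\}$ "covers" all of $\Gamma$ in the sense that every formula in $\Gamma$ — wait, that is false since $\Gamma$ pools \emph{all} agents. The right move is the standard one: show that $\{t \mid s \sim^{\{\psi_a\mid a\in G\}}_b t\}$ is nonempty (it contains $s$ itself, by S5 reflexivity and since $s \models \square_a\psi_a$ gives $s \models \psi_a$ for all $a$, so $s \models \chi$), and that for \emph{every} formula $\sigma \in \Gamma$ we have $s \models \sigma$ (immediate: $\square_a\sigma$ holds for some $a$, so by reflexivity $\sigma$ holds at $s$); so $\M,s$ itself is a model of $\Gamma$ lying in the surviving set, whence $\phi$ holds there. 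But I need $\Gamma \models \phi$ in the semantic sense over \emph{all} models, so the genuine argument must go the other way: I should use the standard "canonical-like" construction, or appeal to compactness and bisimulation — this is exactly the content of the classical full-communication result and is where the real work sits.

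For the direction from $\Gamma \models \phi$ to $\models_\formulasimall D_G\phi$: this is the easier half in spirit but requires a choice of single formulas. The obstacle is that $\Gamma$ may be infinite while each agent may contribute only one formula. Here I would invoke \emph{compactness} of $\lang_0$ (classical, since $\lang_0$ is a normal modal logic with the usual semantics): from $\Gamma \models \phi$ extract a finite $\Gamma_0 \subseteq \Gamma$ with $\Gamma_0 \models \phi$; group the finitely many formulas of $\Gamma_0$ by a witnessing agent, and let $\psi_a$ be the conjunction of those assigned to $a$ (and $\top$ if none), so $\M,s \models \square_a\psi_a$ and $\bigwedge_{a\in G}\psi_a \models \phi$. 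Then in the resulting information state, for any $b \in G$ and any $t$ with $s \sim^{\{\psi_a\mid a\in G\}}_b t$ we have $\M,t \models \bigwedge_a \psi_a$, hence $\M,t \models \phi$ by the entailment; this gives $\models_\formulasimall D_G\phi$ (indeed with "for all $b$", matching the $\forall$ in the $\all$ quantifier). I expect the main obstacle to be the first direction: showing that the set of formulas surviving the communication is strong enough to pin down $\phi$, which amounts to re-deriving the classical equivalence between "true in all $\sim_G$-restricted worlds after an optimal announcement" and "entailed by the pooled theory", and the subtlety there is that $\Gamma$ pools knowledge across \emph{different} agents while a single announcement by $b$ only restricts $\sim_b$. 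I would resolve this by noting that the definition lets \emph{each} agent announce, so the surviving worlds for $b$ must satisfy every $\psi_a$; combined with the fact that $\Gamma \models \phi$ reduces (by compactness) to a finite conjunction which can be distributed among the agents as above, and conversely that any model of $\Gamma$ can be "attached" to $\M$ at $s$ via a $Q$-bisimulation-respecting construction (Theorem~\ref{thm:equiv}) if one restricts to the atoms occurring in $\phi$ — this bisimulation/compactness interplay is the technical heart of the argument.
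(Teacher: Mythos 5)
Your second direction ($\Gamma \models \phi$ implies $\M,s\models_\formulasimall D_G\phi$) is correct and is exactly the paper's argument: compactness of $\lang_0$, extraction of a finite $\Gamma_0$, grouping its members by a witnessing agent, and taking conjunctions. The problem is the first direction, where you explicitly stop short (``this is where the real work sits'') and gesture at a canonical-model or bisimulation-based construction that you never carry out. As written the proof is incomplete there, and the route you sketch --- attaching an arbitrary model of $\Gamma$ to $\M$ at $s$ --- is not obviously workable, since modifying $\M$ changes which formulas the agents know at $s$ and hence changes $\Gamma$ itself.

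The missing idea is much more elementary, and it dissolves the worry you raise about ``$\Gamma$ pools knowledge across different agents while an announcement by $b$ only restricts $\sim_b$''. From $\M,s\models_\formulasimall D_G\phi$ with witnesses $\psi_a$, fix any $a\in G$. The semantic clause says: for every $t$ with $s\sim_a t$ and $\M,t\models\bigwedge_{b\in G}\psi_b$, we have $\M,t\models\phi$. Read world-by-world, this means $\M,t\models\bigwedge_{b\in G}\psi_b\rightarrow\phi$ for \emph{every} $t$ with $s\sim_a t$; together with $\M,t\models\psi_a$ (from $\M,s\models\square_a\psi_a$) this gives $\M,s\models\square_a\bigl(\psi_a\wedge(\bigwedge_{b\in G}\psi_b\rightarrow\phi)\bigr)$. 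So each formula $\psi_a\wedge(\bigwedge_{b\in G}\psi_b\rightarrow\phi)$ lies in $\Gamma$, and the set of these formulas, over all $a\in G$, entails $\phi$ by pure propositional reasoning: all the $\psi_a$ together yield $\bigwedge_{b\in G}\psi_b$, and modus ponens with the conditional yields $\phi$. Hence $\Gamma\models\phi$ with no model-theoretic construction at all --- you never need to show that arbitrary models of $\Gamma$ embed into $\M$, only that $\Gamma$ syntactically contains a finite subset that classically entails $\phi$, after which entailment over all models is automatic. Your observation that $\M,s$ itself satisfies $\Gamma$ is true but, as you yourself noticed, proves nothing about other models; the conditional trick is what converts the local fact ``the surviving $\sim_a$-worlds satisfy $\phi$'' into a formula that agent $a$ actually \emph{knows}, which is the same hypothetical-reasoning device the paper reuses in Propositions~\ref{prop:single}--\ref{prop:finite_sequence}.
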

\begin{proof}
Suppose $\M,s\models_\formulasimall D_G\phi$. Then there are $\{\psi_\rev{b}\mid \rev{b}\in G\}$ such that (1) for all $\rev{b}\in G$, $\M,s\models \square_\rev{b}\psi_\rev{b}$ and (2) for all $a\in G$ and every $s'$, if $s\sim_a^{\{\psi_\rev{b}\mid \rev{b}\in G\}} s'$ then $\M,s'\models \phi$. Furthermore, $s\sim_a^{\{\psi_\rev{b}\mid \rev{b}\in G\}} s'$ holds if and only if $s\sim_a s'$ and $\M,s'\models \bigwedge \{\psi_\rev{b}\mid \rev{b}\in G\}$.

This implies that for every $s'$, if $s\sim_a s'$, then $\M,s'\models \bigwedge\{\psi_\rev{b}\mid \rev{b}\in G\}\rightarrow \phi$. Furthermore, since $\M,s\models \square_a\psi_a$, we also have $\M,s'\models \psi_a$. It follows that $\M,s\models \square_a(\psi_a\wedge (\bigwedge\{\psi_\rev{b}\mid \rev{b}\in G\}\rightarrow \phi))$.

Now, note that $\{\psi_a \wedge (\bigwedge\{\psi_\rev{b}\mid \rev{b}\in G\}\rightarrow \phi)\mid a\in G\}\models \phi$. As such, $\{\psi \mid \exists a\in G: \M,s\models \square_a\psi\}\models \phi$.

For the other direction, suppose that $\{\psi \mid \exists a\in G: \M,s\models \square_a\psi\}\models \phi$. Since epistemic logic is compact, there is a finite $\Psi\subseteq\{\psi \mid \exists a\in G: \M,s\models \square_a\psi\}$ such that $\Psi\models \phi$. For every $a\in G$, let $\psi_a = \bigwedge \{\psi\in \Psi\mid \M,s\models \square_a\psi\}$.

We now have $\M,s\models \square_a\psi_a$ for every $a\in G$. Furthermore, since $\Psi\models \phi$, we also have $\M,s'\models \bigwedge\{\psi_a\mid a\in G\}\rightarrow \phi$ \rev{for every $s'\in S$ such that $s\sim_as'$}. It follows that $\M,s\models_\rev{\formulasimall} D_G\phi$.
\end{proof}
Note that the proof critically depends on the compactness of epistemic logic. If we used a non-compact base logic, such as epistemic logic with common knowledge, the equivalence would not hold.

\section{Relative Strength}
\label{sec:expressivity}
Now that we have formally defined the various types of distributed knowledge that we are interested in, we can investigate their properties. In particular, we are interested in which variants imply each other.


In most cases, it is clear that one variant $\tau_1$ is at least as strong as another variant $\tau_2$, in the sense that $\M,s\models_{\tau_1}D_G\phi$ implies $\M,s\models_{\tau_2}D_G\phi$.
In particular, it is easy to see that the following hold.
\begin{proposition}\hfill
\label{prop:basic_comparisons}
\begin{itemize}
    \item For every $\tau$, $\M,s\models_\tau D_G\phi$ implies $\M,s\models_\nlall \rev{D_G}\phi$.
    
    \item For every $f$, $a$ and $o$, $\M,s\models_{(f,a,o,\all)}D_G\phi$ implies $\M,s\models_{(f,a,o,\some)}D_G\phi$.
    \item For all $o$ and $q$, $\M,s\models_{(\yeslang,\single,o,q)}D_G\phi$ implies $\M,s\models_{(\yeslang,\multi,o,q)}D_G\phi$.
    
    \item For all $a$ and $q$, $\M,s\models_{(\yeslang,a,\simult,q)}D_G\phi$ implies $\M,s\models_{(\yeslang,a,\seq,q)}D_G\phi$.

    Furthermore, $\M,s\models_{(\yeslang,\multi,\seq,q)}D_G\phi$ implies $\M,s\models_{(\yeslang,\multi,\ord,q)}D_G\phi$.
\end{itemize}
\end{proposition}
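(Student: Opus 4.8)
All four items in Proposition~\ref{prop:basic_comparisons} are ``monotonicity'' statements: enlarging the set of communications that $G$ may perform, or weakening the success condition, can only make it easier for $D_G\phi$ to hold. The uniform strategy is therefore: for each claim, exhibit a map that sends any successful communication witnessing the stronger side to a communication witnessing the weaker side, while arguing that the resulting information state is either identical or at least as fine-grained with respect to the agent who learns $\phi$. Since $\M,s\models_\tau D_G\phi$ asks for the existence of some $\sim'\in\R{G}{s}{\tau}$ making some (or all) agents know $\phi$, it suffices in each case to show $\R{G}{s}{\tau_1}\subseteq\R{G}{s}{\tau_2}$ up to the relevant notion, or to relate the reachable states directly.

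For the first bullet, I would use the observation made just before the proposition that the non-linguistic intersection communication is the ``most informative'' one: for any $\tau$ and any $\sim'\in\R{G}{s}{\tau}$ we have $\sim'_a\subseteq\sim_a$ for every $a\in G$ (each communication step only removes pairs), hence $\sim'_a\subseteq\bigcap_{b\in G}\sim_b=\sim''_a$, where $\sim''$ is the unique element of $\R{G}{s}{\nlall}$. If some $a\in G$ knows $\phi$ via $\sim'$, then since $\sim''_a\subseteq\sim'_a$ the same agent knows $\phi$ via $\sim''$, which is exactly $\M,s\models_\nlall D_G\phi$; and for the $\forall$-variants the same inclusion argument applies to every agent simultaneously. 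For the second bullet, the identity map works: the index set in $\R{G}{s}{(f,a,o,\all)}$ and $\R{G}{s}{(f,a,o,\some)}$ is literally the same, only the quantifier over $G$ in the truth condition changes from $\forall$ to $\exists$, so any witnessing $\sim'$ for the $\forall$-version is immediately a witness for the $\exists$-version (as $G\neq\emptyset$). For the third bullet, note that a single formula $\psi_a$ is the same as the set $\{\psi_a\}$, and $\M,s\models\square_a\psi_a$ iff $\forall\psi\in\{\psi_a\}:\M,s\models\square_a\psi$; hence every communication available in the $\single$ regime is also available (as a singleton-per-agent assignment) in the $\multi$ regime, giving $\R{G}{s}{(\yeslang,\single,o,q)}\subseteq\R{G}{s}{(\yeslang,\multi,o,q)}$ and the conclusion follows for either $q$.

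The fourth bullet is the only one with genuine content, so this is where I expect the main obstacle. For $\simult\Rightarrow\seq$: given a simultaneous assignment $\{\Psi_a\mid a\in G\}$ (or $\{\psi_a\}$ in the single case) that is admissible, i.e.\ $\M,s\models\square_a\psi$ for all $\psi\in\Psi_a$, I would build a sequence $(\alpha,f,g)$ that lists all these formulas one at a time in some order, choosing $\alpha$ appropriately ($\alpha=|G|$ in the single case so that $f$ is a bijection, $\alpha=\omega$ in the set case with $f$ cycling through $G$). The key point to verify is \emph{correctness}: because every formula in the assignment is \emph{already known at $s$} before any communication, it remains known after any prefix of announcements — each announcement only shrinks the accessibility relations, so if $g(\delta)$ was true in all of $\sim_{f(\delta)}$-accessible worlds it is a fortiori true in all worlds accessible via the smaller relation $\sim_{f(\delta)}\cap\bigcap_{\epsilon<\delta}\sim^{\alpha,f,g,\epsilon}_{f(\delta)}$. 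Thus the sequence is correct, and a routine check shows $\sim^{\alpha,f,g}_a$ equals the simultaneous outcome $\sim^{\{\Psi_b\mid b\in G\}}_a$ for every $a$ (both are $\sim_a$ restricted to worlds satisfying the conjunction of all announced formulas), so the witness transfers. For the furthermore-clause $\seq\Rightarrow\ord$ in the $\multi$ case, the containment $\R{G}{s}{(\yeslang,\multi,\seq,q)}\subseteq\R{G}{s}{(\yeslang,\multi,\ord,q)}$ is immediate once one observes that $\omega$ is itself an ordinal, so any $(\omega,f,g)$ correct for $G,s$ is an admissible $(\alpha,f,g)$ with $\alpha=\omega$ in the $\ord$-definition; no work is needed beyond noting the definitions are syntactically nested. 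The part worth being careful about is purely bookkeeping: matching the arity constraint on $\alpha$ and the bijectivity of $f$ in the single-formula case when re-serialising a simultaneous communication, and confirming that ``not requiring fairness'' (agents may announce $\top$) makes the re-serialisation always possible.
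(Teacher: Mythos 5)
Your overall strategy (transfer a witnessing communication and use monotonicity) is the right one, and your arguments for the second, third and fourth bullets are correct; the paper gives no proof at all for this proposition, and what you write for those items is the intended reasoning, including the key observation that serialising a simultaneous announcement preserves correctness because each formula is already known before any communication takes place.

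The first bullet, however, contains a genuine error. From $\sim'_a\subseteq\sim_a$ you conclude ``hence $\sim'_a\subseteq\bigcap_{b\in G}\sim_b$''; this is a non sequitur and is false in general (take the trivial communication in which everyone announces $\top$: then $\sim'_a=\sim_a$, which need not be contained in $\sim_b$ for $b\neq a$). In the very next sentence you silently switch to the reverse inclusion $\sim''_a\subseteq\sim'_a$, which is indeed the one you need in order to transfer knowledge of $\phi$ to the smaller relation, but you never establish it, and your justification ``each communication step only removes pairs'' argues in the opposite direction. The inclusion that actually holds --- and only for pairs whose first component is $s$, not globally --- is $\{t\mid(s,t)\in\bigcap_{b\in G}\sim_b\}\subseteq\{t\mid(s,t)\in\sim'_a\}$, and proving it requires the correctness condition: every formula ever contributed is known to its contributor at the moment of announcement, hence (by transfinite induction on the stage, in the sequential variants) true at every $t$ with $(s,t)\in\bigcap_{b\in G}\sim_b$, so no such $t$ is ever eliminated from any agent's relation to $s$. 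With that lemma in place the first bullet follows; as written, the step fails.
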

Furthermore, it follows from \cite{vanderhoek99} that $\nlall$ and $\nlsingle$ do not imply $\formulasimall$.
This, however, leaves many comparisons open, which we solve here.


\subsection{Single formula}
\label{subsec:single}
First, let us compare variants that use a single formula. In particular, we will show that all the four variants, i.e. $\formulasimsingle$, $\formulasimall$, $\formulaseqsingle$, and $\formulaseqall$, are equivalent

\begin{proposition}
\label{prop:single}
$\M,s\models_\formulasimsingle D_G\phi$ 
if and only if $\M,s\models_\formulasimall D_G\phi$.
\end{proposition}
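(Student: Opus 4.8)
The plan is to prove the non-trivial direction, namely that $\M,s\models_\formulasimsingle D_G\phi$ implies $\M,s\models_\formulasimall D_G\phi$; the converse is immediate from Proposition~\ref{prop:basic_comparisons}, since for the fixed form, amount, and order of sharing ($\yeslang$, $\single$, $\simult$) the $\all$-variant always implies the $\some$-variant.

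So suppose $\M,s\models_\formulasimsingle D_G\phi$. Unfolding the semantics, there are $\lang_0$-formulas $\psi_b$, one for each $b\in G$, with $\M,s\models\square_b\psi_b$, together with a distinguished ``winner'' $a\in G$ such that every $t$ with $s\sim_a t$ and $\M,t\models\bigwedge_{b\in G}\psi_b$ satisfies $\M,t\models\phi$. The observation that drives the proof is that this winning condition is nothing but the knowledge claim $\M,s\models\square_a(\bigwedge_{b\in G}\psi_b\rightarrow\phi)$: at any $t$ with $s\sim_a t$ the implication holds vacuously when $\bigwedge_{b\in G}\psi_b$ fails, and holds by hypothesis when it does not. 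Hence the winner can ``fold the implication into its announcement''. Concretely, I would put $\chi_a := \psi_a\wedge(\bigwedge_{b\in G}\psi_b\rightarrow\phi)$ and $\chi_c := \psi_c$ for $c\in G\setminus\{a\}$; then every agent knows its own formula, so $\sim^{\{\chi_c\mid c\in G\}}\in\R{G}{s}{\formulasimall}$. Since $\bigwedge_{c\in G}\chi_c$ is logically equivalent to $(\bigwedge_{b\in G}\psi_b)\wedge\phi$, any world satisfying the conjunction of the $\chi_c$'s satisfies $\phi$; hence for every $c\in G$ and every $t$ with $s\sim^{\{\chi_c\mid c\in G\}}_c t$ we get $\M,t\models\phi$. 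That is exactly $\M,s\models_\formulasimall D_G\phi$.

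There is no deep technical obstacle here; the real work is in spotting the right formula for the winner to announce, and the intuition is that the one-winner success condition is itself an S5 knowledge statement of the winner, which can therefore simply be re-announced so that the whole group comes to share it. The one point that needs care, and which I expect to be the only thing worth spelling out, is checking that $\chi_a$ is a \emph{legitimate} announcement: that it is an $\lang_0$-formula the winner actually knows at $s$. Knowledge is precisely the displayed claim $\M,s\models\square_a(\bigwedge_{b\in G}\psi_b\rightarrow\phi)$ together with $\M,s\models\square_a\psi_a$, while membership in $\lang_0$ is where the restriction of shared formulas to $\lang_0$ (i.e. $\phi\in\lang_0$) is used.
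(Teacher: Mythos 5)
Your proof is correct and is essentially the paper's own argument: the trivial direction comes from Proposition~\ref{prop:basic_comparisons}, and for the non-trivial direction the winning agent folds its success condition into its announcement, replacing $\psi_a$ by $\psi_a\wedge(\bigwedge_{b\in G}\psi_b\rightarrow\phi)$, which it knows at $s$, so that every agent's surviving worlds satisfy $\phi$. Your closing remark that this requires the folded formula (hence $\phi$ itself) to lie in $\lang_0$ makes explicit a side condition that the paper's proof leaves implicit.
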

\begin{proof}
We know from Proposition~\ref{prop:basic_comparisons} that ``all'' implies ``single'', so it suffices to show that the reverse also holds.
Suppose, therefore, that $\M,s\models_{\formulasimsingle} D_G\phi$. Let $\{\psi_a\mid a\in G\}$ be the witnessing formulas that, if communicated among the group, would make one agent, let's call them $x$, learn that $\phi$ is true.

This means that for every $s'$, if $(s,s')\in \sim_x$ and $\M,s'\models \bigwedge_{a\in G}\psi_a$, then $\M,s'\models \phi$. This implies that for every $(s,s')\in \sim_x$, we have $\M,s'\models \bigwedge_{a\in G}\psi_a\rightarrow \phi$. Furthermore, since $x$ was able to provide the formula $\psi_x$, we also have $\M,s'\models \psi_x$ for each such $s'$. Hence $\M,s\models \square_x(\psi_x \wedge (\bigwedge_{a\in G}\psi_a\rightarrow \phi))$.

Consider then the alternative communication $\{\psi'_a\mid a\in G\}$ where $\psi'_a = \psi_a$ for $a\not = x$ and $\psi'_x = \psi_x \wedge (\bigwedge_{a\in G}\psi_a\rightarrow \phi)$.
For every agent $a\in G$ and every $s'$, if $(s,s')\in \sim_a$ and $\M,s'\models \bigwedge_{a\in G}\psi_a'$, we then have, in particular, $\M,s'\models \bigwedge_{a\in G}\psi_a$ and $\M,s'\models \bigwedge_{a\in G}\psi_a\rightarrow \phi$, and hence $\M,s'\models \phi$. This implies that all agents learn $\phi$, and therefore $\M,s\models_\formulasimall D_G\phi$.
\end{proof}
In effect, the single agent $x$ that learns $\phi$ can include hypothetical reasoning in the formula that they provide to the group. Instead of saying ``$\psi_x$ is true'', they can say ``$\psi_x$ is true, and if you were to tell me $\{\psi_a\mid a\in G\}$, then I would learn that $\phi$ is true''. This suffices for all the other agents to learn the truth of $\phi$, if all $a\not = x$ do indeed provide formulas $\psi_a$.
The same trick can be used in the sequential version.
\begin{proposition}
\label{prop:finite_singleall}
$\M,s\models_\formulaseqsingle D_G\phi$ 
if and only if $\M,s\models_\formulaseqall D_G\phi$.
\end{proposition}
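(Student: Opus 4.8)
The plan is to mirror the structure of the proof of Proposition~\ref{prop:single}, adapting the ``hypothetical reasoning in the announced formula'' trick to the sequential setting. By Proposition~\ref{prop:basic_comparisons} (the ``single'' implies ``multi''... wait — actually the relevant direction here is ``all'' implies ``single'', which again follows from the second bullet of Proposition~\ref{prop:basic_comparisons}), it suffices to prove that $\M,s\models_\formulaseqsingle D_G\phi$ implies $\M,s\models_\formulaseqall D_G\phi$. So assume $\M,s\models_\formulaseqsingle D_G\phi$, witnessed by a bijection $f:|G|\to G$ and a formula-assignment $g:|G|\to\lang_0$ that are correct for $G,s$, such that some agent $x\in G$ learns $\phi$ in the resulting state $\sim^{|G|,f,g}$; i.e.\ for every $s'$ with $s\sim^{|G|,f,g}_x s'$ we have $\M,s'\models\phi$.

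The key step is to identify the contribution $\M,s\models\square_x\big(g(\delta_x)\wedge(\Theta\rightarrow\phi)\big)$, where $\delta_x=f^{-1}(x)$ is $x$'s turn and $\Theta$ is a single $\lang_0$-formula encoding ``everything that was said during this conversation''. Concretely, since $G$ is finite and each agent says one formula, I would take $\Theta=\bigwedge_{\delta<|G|}g(\delta)$; the final state $\sim^{|G|,f,g}_a$ consists exactly of pairs $(x,y)\in\sim_a$ with $\M,y\models\Theta$. From $s\sim^{|G|,f,g}_x s'$ iff $s\sim_x s'$ and $\M,s'\models\Theta$ (using correctness of $x$'s own turn and the definition of $\sim^{|G|,f,g,\delta}$), the hypothesis that $x$ learns $\phi$ gives $\M,s'\models\Theta\rightarrow\phi$ for every $s'$ with $s\sim_x s'$, and also $\M,s'\models g(\delta_x)$ by correctness, so indeed $\M,s\models\square_x\big(g(\delta_x)\wedge(\Theta\rightarrow\phi)\big)$.

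Now I build a new correct sequence for the ``all'' variant. Keep $f$ the same, and modify $g$ to $g'$ by setting $g'(\delta)=g(\delta)$ for $\delta\neq\delta_x$ and $g'(\delta_x)=g(\delta_x)\wedge(\Theta\rightarrow\phi)$. I must check two things: (i) $f,g'$ is still correct for $G,s$, and (ii) in the resulting final state every agent learns $\phi$. For (i), the only turn whose formula changed is $\delta_x$; correctness at $\delta_x$ requires $\M,s'\models g'(\delta_x)$ for all $s'$ accessible to $x$ via $\sim_x\cap\bigcap_{\epsilon<\delta_x}\sim^{\alpha,f,g',\epsilon}_x$, and since $g'$ agrees with $g$ at all $\epsilon<\delta_x$ these intermediate relations are unchanged, so this set is contained in $\{s':s\sim_x s'\}$, on which we just showed $g'(\delta_x)$ holds; correctness at turns $\delta>\delta_x$ is fine because the new formula is strictly stronger, hence the intermediate relations only shrink, so the correctness condition (which is a universal statement over a smaller set) is preserved. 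For (ii): $\sim^{|G|,f,g'}_a$ consists of $(x,y)\in\sim_a$ with $\M,y\models\Theta\wedge(\Theta\rightarrow\phi)$, hence $\M,y\models\phi$; so every $a\in G$ learns $\phi$, giving $\M,s\models_\formulaseqall D_G\phi$.

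The main obstacle is the bookkeeping around correctness under the substitution: one has to be careful that replacing $g(\delta_x)$ by a stronger formula does not break correctness at $x$'s own turn (handled because the strengthening clause $\Theta\rightarrow\phi$ is true throughout $\{s':s\sim_x s'\}$, a superset of the relevant accessibility set) nor at later turns (handled because strengthening one announcement only shrinks all subsequent intermediate relations, and correctness is monotone under shrinking the domain of its universal quantifier). A minor subtlety worth a sentence in the write-up: since every agent takes exactly one turn and $f$ is a bijection, $\Theta$ is a finite conjunction and hence a genuine $\lang_0$-formula, so $g'(\delta_x)\in\lang_0$ as required.
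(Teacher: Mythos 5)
Your proof is correct and follows essentially the same route as the paper's: the designated agent $x$ strengthens their single announcement to $g(\delta_x)\wedge(\Theta\rightarrow\phi)$, where $\Theta$ is the conjunction of everything said, so that the final relation forces $\phi$ for every agent; you additionally spell out the correctness bookkeeping that the paper leaves implicit. One small over-claim: $\M,s\models\square_x\bigl(g(\delta_x)\wedge(\Theta\rightarrow\phi)\bigr)$ need not hold at $s$, since correctness only guarantees $g(\delta_x)$ on the $\sim_x$-successors of $s$ that survive the announcements made \emph{before} $x$'s turn, not on all $\sim_x$-successors; this is harmless because the correctness condition at $\delta_x$ quantifies only over that smaller set, on which both conjuncts do hold (the first by the original correctness, the second because it even holds on the larger set).
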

\begin{proof}
\rev{Let $(a_1,\cdots,a_n)$ be an ordering of $G$} such that $a_i$ takes their turn before $a_j$ iff $i<j$, and let \rev{$(\psi_1,\cdots, \psi_n)$} be the corresponding formulas that witness $\M,s\models_\formulaseqsingle D_G\phi$, where $a_x$ is the agent that learns $\phi$.

Then $a_x$ already knows, before the communication starts, that $\bigwedge \{\psi_i \mid 1\leq i \leq n\} \rightarrow \phi$. Furthermore, once it is their turn, they have also learned that $\psi_{a_x}$. Hence they could instead say $\psi'_{a_x}=\psi_{a_x} \wedge (\bigwedge \{\psi_i \mid 1\leq i \leq n\} \rightarrow \phi)$, which would result in all agents learning $\phi$. \rev{Note that since we work with the static notion of distributed knowledge, rather than a dynamic one, knowledge of agents is monotonic under announcements and they can always announce their respective formulas.}
\end{proof}

Additionally, a similar kind of hypothetical reasoning can be used to remove the reliance on sequential communication.

\begin{proposition}
\label{prop:finite_sequence}
$\M,s\models_\formulaseqall D_G\phi$ 
if and only if $\M,s\models_\formulasimall D_G\phi$.
\end{proposition}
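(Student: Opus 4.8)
The plan is to prove the two implications separately. The direction $\M,s\models_\formulasimall D_G\phi \Rightarrow \M,s\models_\formulaseqall D_G\phi$ is immediate from Proposition~\ref{prop:basic_comparisons}: since $\formulasimall = (\yeslang,\single,\simult,\all)$ and $\formulaseqall = (\yeslang,\single,\seq,\all)$, that proposition already gives us that simultaneous sharing is at least as strong as sequential sharing for the same $a$ and $q$. So all the work is in the converse, i.e.\ in turning a successful \emph{sequential} round of single-formula announcements into a successful \emph{simultaneous} one.

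Here is the idea for the converse. Suppose $\M,s\models_\formulaseqall D_G\phi$. Then there are an ordering $a_1,\dots,a_n$ of $G$ and formulas $\psi_1,\dots,\psi_n\in\lang_0$, with $a_i$ announcing $\psi_i$ on turn $i$, that are correct for $G$ and $s$ and after which every agent knows $\phi$. Unwinding the definition of $\sim^{\alpha,f,g,\delta}$ by induction on $\delta$ shows that after turn $i$ each agent $b$ has accessibility relation $\{(x,y)\in\sim_b \mid \M,y\models\bigwedge_{j\le i}\psi_j\}$, so the final relation of each $b$ is $\{(x,y)\in\sim_b\mid\M,y\models\bigwedge_{j\le n}\psi_j\}$ --- exactly the relation produced by a \emph{simultaneous} announcement of the whole set $\{\psi_1,\dots,\psi_n\}$. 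The only obstruction to replaying this simultaneously as is, is that $a_i$ need not \emph{know} $\psi_i$ at the outset; correctness only guarantees that $a_i$ knows $\psi_i$ once $\psi_1,\dots,\psi_{i-1}$ have been announced. But spelling out the correctness condition at step $i-1$ yields precisely $\M,s\models\square_{a_i}(\bigwedge_{j<i}\psi_j\to\psi_i)$, i.e.\ before any communication each agent does know the conditional version of their formula.

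So I would let $a_i$ contribute, in the simultaneous round, the formula $\psi'_{a_i} := \bigwedge_{j<i}\psi_j\to\psi_i$ (with $\psi'_{a_1}=\psi_1$), which lies in $\lang_0$ and satisfies $\M,s\models\square_{a_i}\psi'_{a_i}$. A short induction shows $\bigwedge_{i\le n}\psi'_{a_i}$ and $\bigwedge_{i\le n}\psi_i$ are logically equivalent, so the simultaneous information state $\sim^{\{\psi'_{a_i}\mid a_i\in G\}}$ coincides with the final sequential state above; since every agent knew $\phi$ in that state, the same holds after the simultaneous announcement, giving $\M,s\models_\formulasimall D_G\phi$. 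I expect the main obstacle to be bookkeeping rather than conceptual: carefully matching the (here finite) iterative definition of the sequential outcome against the one-shot definition of the simultaneous outcome, and correctly reading the ``already knows the conditional'' fact out of the correctness condition. This is essentially the same hypothetical-reasoning trick used in Propositions~\ref{prop:single} and~\ref{prop:finite_singleall}, now applied to linearize the turn order rather than to pass from ``some'' to ``all''.
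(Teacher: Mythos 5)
Your proposal is correct and follows essentially the same route as the paper's own proof: the easy direction via Proposition~\ref{prop:basic_comparisons}, and for the converse, reading $\M,s\models\square_{a_i}(\bigwedge_{j<i}\psi_j\rightarrow\psi_i)$ off the correctness condition and observing that simultaneously announcing these conditionals has the same collective effect as the sequential announcement of the $\psi_i$. The only difference is that you spell out the bookkeeping (unwinding $\sim^{\alpha,f,g,\delta}$ and the logical equivalence of the two conjunctions) that the paper leaves implicit.
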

\begin{proof}
\rev{By Proposition~\ref{prop:basic_comparisons}, $\M,s\models_\formulasimall D_G\phi$ implies $\M,s\models_\formulaseqall D_G\phi$. Left to show is the other direction.}

Suppose \rev{therefore} that $\M,s\models_\formulaseqall D_G\phi$, as witnessed by order \rev{$(a_1,\cdots,a_n)$} and formulas \rev{$(\psi_1,\cdots,\psi_n)$}. Then, at stage $i$, the preceding communications $\{\psi_j \mid j<i\}$ suffice for agent $i$ to learn that $\psi_i$ holds, in the sense that for all $s'$, if $(s,s')\in \sim_{a_i}$ and $\M,s'\models \bigwedge_{j<i}\psi_j$ then $\M,s'\models \psi_i$.

It follows that, before the communication started, $a_i$ already knew $\bigwedge_{j<i}\psi_j \rightarrow \psi_i$. So, in the simultaneous version, $a_i$ could provide that formula.

Furthermore, collectively, communicating $\{\psi_i \mid 1\leq i \leq n\}$ has the same effect as communicating $\{\bigwedge_{j<i}\psi_j \rightarrow \psi_i \mid 1\leq i \leq n\}$. So we also have $\M,s\models_\formulasimall D_G\phi$.
\end{proof}
The equivalence of other pairs of the single formula variants follows immediately from the transitivity of the equivalence relation.

\subsection{Sets of formulas}
Let us now consider the variants where each agent may provide a set of formulas. As a first step, we will show that none of them imply the single formula variants. For this, it suffices to show that any one of the single formula variants is not implied by the strongest set variant (i.e., the set variant that is the hardest to satisfy, which is \setsimall).

\begin{restatable}{proposition}{propFormulaSet}
\label{prop:formula_set}
$\M,s\models_\setsimall D_G\phi$ does not necessarily imply that $\M,s\models_\formulasimall D_G\phi$.
\end{restatable}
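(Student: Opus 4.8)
The plan is to use Proposition~\ref{prop:connections} to turn the statement into a purely model-theoretic one: it suffices to exhibit $\M,s$, a group $G$, and a formula $\phi$ with $\M,s\models_\setsimall D_G\phi$ while $\Gamma\not\models\phi$, where $\Gamma=\{\psi\in\lang_0\mid\exists a\in G:\M,s\models\square_a\psi\}$; write $\Gamma_a=\{\psi\in\lang_0\mid\M,s\models\square_a\psi\}$, so $\Gamma=\bigcup_{a\in G}\Gamma_a$. The first step is the routine remark that for the ``$\forall$'' variant the optimal announcement is for each agent to state its whole theory (announcing more only shrinks the relations, which can only help a universally quantified condition), so $\M,s\models_\setsimall D_G\phi$ holds iff, putting $\sim'_a=\{(x,y)\in\sim_a\mid\M,y\models\Gamma\}$, every $t$ with $s\sim'_a t$ satisfies $\phi$, for every $a\in G$. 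Thus the task becomes: build a model in which restricting by $\Gamma$ forces $\phi$ among the worlds one step from $s$, while $\Gamma\cup\{\neg\phi\}$ is still satisfiable. The point is that $\formulasimall$ asks a \emph{global} entailment question about $\Gamma$, whereas $\setsimall$ asks a question \emph{internal} to the fixed model $\M$, which may force $\phi$ simply because $\M$ lacks the worlds that would witness $\Gamma\not\models\phi$. (This also explains why the separating model must be infinite: on a finite model each $\Gamma_a$ is equivalent to a single formula, so the two notions collapse.)

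For the construction take $G=\{a,b\}$, a distinguished atom $g$, and two disjoint families of atoms $p_0,p_1,\dots$ and $q_0,q_1,\dots$; let $\phi=g$ and $\M,s\models g$. Give $a$ at $s$ an infinite $\sim_a$-class $\{s\}\cup\{w_n\mid n\in\mathbb{N}\}$ in which each $w_n$ falsifies $g$ and falsifies exactly $p_n$, and symmetrically give $b$ at $s$ an infinite $\sim_b$-class $\{s\}\cup\{u_m\mid m\in\mathbb{N}\}$ in which each $u_m$ falsifies $g$ and falsifies exactly $q_m$. Then $\Gamma_b$ contains every $p_i$ and $\Gamma_a$ contains every $q_i$, so the $\Gamma$-restriction deletes every $w_n$ (which falsifies $p_n\in\Gamma$) and every $u_m$, leaving $\sim'_a[s]=\sim'_b[s]=\{s\}$; since $\M,s\models g$ this yields $\M,s\models_\setsimall D_G g$. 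To defeat $\formulasimall$, add to $\M$ a ``generic'' world $v$ that is not $\sim_a$- or $\sim_b$-accessible from $s$, with $\M,v\models\neg g$ and $\M,v\models p_i$, $\M,v\models q_i$ for all $i$, whose accessibility classes are arranged so that $\M,v\models\Gamma_a$ and $\M,v\models\Gamma_b$. Such a $v$ witnesses that $\Gamma\cup\{\neg g\}$ is satisfiable, hence $\Gamma\not\models g$, hence by Proposition~\ref{prop:connections} $\M,s\not\models_\formulasimall D_G g$, and together with the previous paragraph the proposition follows.

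The main obstacle is the definition of $v$ together with the verification $\M,v\models\Gamma_a\cup\Gamma_b$. No world of $\sim_a[s]$ can be reused as $v$ (none satisfies $\neg g$ along with every $p_i$), so $v$ has to be a genuine limit point: its $\sim_a$-class must realise (at least) every $\lang_0$-type realised in $\sim_a[s]$ --- so it must contain a $g$-world bisimilar to $s$ and, for each $i$, a world falsifying $p_i$ --- and similarly for its $\sim_b$-class; moreover the worlds populating those classes need \emph{their} classes to be faithful copies of the corresponding classes of $\M$ below $s$, so $v$ sits on top of an infinite, recursively specified gadget. One then verifies $\M,v\models\Gamma_a$ by producing, taking $Q$ to be all atoms, a $Q$-bisimulation between $v$'s $\sim_a$-class and $\sim_a[s]$ that matches each $w_n$ with a world of $v$'s class agreeing with it off $p_n$, and appealing to Theorem~\ref{thm:equiv} to conclude that $v$ satisfies every $\lang_0$ formula true throughout $\sim_a[s]$, i.e.\ all of $\Gamma_a$; symmetrically for $\Gamma_b$. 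The genuinely delicate part is that the two demands on $v$ are jointly satisfiable: $\Gamma_a$ and $\Gamma_b$ each constrain only finitely much structure at a time, so although no single world of $\M$ reachable from $s$ realises $\Gamma_a\cup\Gamma_b\cup\{\neg g\}$, a suitable limit point does --- and it is exactly this ``infinitely many formulas can pin down a non-definable limit, one formula cannot'' effect, made concrete by the two atom families $p_i$ and $q_i$, that drives the separation.
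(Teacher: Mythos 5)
Your reduction is sound in outline: the positive half ($\M,s\models_\setsimall D_G g$ via announcing the full theories, which kills each $w_n$ through $p_n\in\Gamma_b$ and each $u_m$ through $q_m\in\Gamma_a$) is fine, and using Proposition~\ref{prop:connections} to recast the negative half as ``$\Gamma_a\cup\Gamma_b\cup\{\neg g\}$ is satisfiable'' is legitimate. But that negative half is the entire content of the proposition, and you have not established it; worse, whether it is even true depends on exactly the parts of the model you leave unspecified (the $\sim_b$-classes of the $w_n$, the $\sim_a$-classes of the $u_m$, and the recursive gadget around $v$). Concretely, take the most natural completion of your core, where each $w_n$ forms a singleton $\sim_b$-class and each $u_m$ a singleton $\sim_a$-class. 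Then $\neg g\rightarrow\square_b\neg g$ is true throughout $\sim_a[s]$, hence lies in $\Gamma_a$, while $\lozenge_b g$ lies in $\Gamma_b$; any world satisfying both must satisfy $g$, so $\Gamma\models g$ and, by Proposition~\ref{prop:connections}, $\M,s\models_\formulasimall D_G g$ after all. (Equivalently, the single known formulas $\psi_a=\neg g\rightarrow\square_b\neg g$ and $\psi_b=\lozenge_b g$ already let both agents learn $g$.) So $\Gamma_a$ and $\Gamma_b$ do \emph{not} ``constrain only finitely much structure at a time'' in any automatic sense: they contain arbitrarily deep modal information about the off-path structure, and your appeal to compactness only shifts the burden to showing that \emph{every} pair of known formulas $\psi_a,\psi_b$ is consistent with $\neg g$ --- which is precisely the step you defer.

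To close the gap you would have to build the limit world $v$ together with all the auxiliary worlds its (and their) classes require, and then prove the bisimulation claims that certify $v\models\Gamma_a\cup\Gamma_b$; at that point you have essentially reconstructed the paper's example. The paper works instead with the grid $S=\{s_{x,y}\mid x,y\in[0,1)\}$, where $a$ knows the first coordinate ($p_i$ encodes its binary digits), $b$ knows the second ($q_i$), and $r$ holds iff $x=y$: the ``generic'' worlds you need are already present as the worlds $s_{x,y}$ with $x,y$ avoiding the finitely many atoms occurring in a candidate pair $\psi_a,\psi_b$, and a single $Q$-bisimulation (relating worlds that agree on $Q\cup\{r\}$) shows such worlds cannot be excluded, so neither agent learns $r$ from any one pair of formulas. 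I would recommend either adopting that style of model or, if you keep your star-shaped construction, specifying the off-path structure completely (in particular ensuring every $w_n$ and $u_m$ satisfies $\lozenge_b g$, $\lozenge_a g$ and all the higher-order facts that $\Gamma_a,\Gamma_b$ will record) and carrying out the consistency verification in full; as it stands the proof is incomplete at its decisive point.
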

\begin{proof} (\textit{Sketch; full proof in Appendix~\ref{appendix:proofs}}).
Suppose that, for every $i\in \mathbb{N}$, agent $a$ knows whether $p_i$ holds while $b$ knows whether $q_i$ holds. Furthermore, suppose both agents know that $r$ is true if and only if, for every $i$,  $p_i\leftrightarrow q_i$ holds.
Consider the case where all $p_i$ and $q_i$ are, in fact, false, so $r$ holds.

Under the $\setsimall$ definition, we then have $D_{\{a,b\}}r$. After all, $a$ can tell $b$ that all $p_i$ are false, while $b$ can tell $a$ that all $q_i$ are false. This suffices for both of them to discover that $r$ is true.

Under the $\formulasimall$ definition, however, we have $\neg D_{\{a,b\}}r$. This is because the agents can only learn that $r$ is true if $p_i\leftrightarrow q_i$ for all $i$, which cannot be expressed in a finite set of formulas.
\end{proof}



Next, let us note that $\setsimsingle$ does not imply $\setsimall$.
\begin{restatable}{proposition}{propsomeall}
$\M,s\models_\setsimsingle D_G\phi$ does not necessarily imply $\M,s\models_\setsimall D_G\phi$.
\end{restatable}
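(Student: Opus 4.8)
The plan is to reduce the question to a single communication and then build an infinite witnessing model. In the simultaneous set-variant, announcing more known formulas only deletes more worlds from each agent's relation, so the effect of every communication is dominated by the \emph{maximal} one, in which each agent announces all of the $\lang_0$-formulas it knows. Writing $K_c := \{\psi \in \lang_0 \mid \M,s \models \square_c \psi\}$ and $[s]_c := \{t \mid s \sim_c t\}$, after the maximal communication agent $a$'s set of possible worlds is $\{t \in [s]_a \mid \M,t \models \psi \text{ for all } \psi \in K_b\}$ (the formulas of $K_a$ already hold throughout $[s]_a$), and symmetrically for $b$. Hence $\M,s \models_\setsimsingle D_G \phi$ iff after the maximal communication at least one of $a,b$ knows $\phi$, while $\M,s \models_\setsimall D_G \phi$ iff both do.

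Thus it suffices to produce $\M,s$ and $\phi$ with $G = \{a,b\}$ such that: (i) $\M,s \models \phi$; (ii) every $t \in [s]_a$ with $\M,t \not\models \phi$ falsifies some formula of $K_b$, so that after the maximal communication $a$'s possible worlds all satisfy $\phi$ and hence $\M,s \models_\setsimsingle D_G \phi$; (iii) there is a world $v \in [s]_b$ with $\M,v \not\models \phi$ that satisfies every formula of $K_a$, so that $v$ is never deleted from $b$'s relation, whence $b$ never learns $\phi$ and $\M,s \not\models_\setsimall D_G \phi$; and, so that $a$ does not already know $\phi$ (which would let $a$ simply announce $\phi$ and make $\setsimall$ hold), $[s]_a \not\subseteq \true{\phi}$.

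Conditions (ii) and (iii) pull in opposite directions --- $v$ must be indistinguishable from $[s]_a$ by any $\lang_0$-formula $a$ knows yet $v \notin [s]_a$, while every $\neg\phi$-world of $[s]_a$ must be $\lang_0$-separable from $[s]_b \ni v$ --- and reconciling them is the heart of the proof. It cannot be done in an image-finite model: there the Hennessy--Milner property would make $v$ bisimilar to some $w \in [s]_a$, and since $\lang_0$-formulas respect bisimilarity (Theorem~\ref{thm:equiv}), $w$ would satisfy every formula of $K_b$ (being bisimilar to $v \in [s]_b$) and so could not be a $\neg\phi$-world of $[s]_a$, contradicting (ii) together with $[s]_a \not\subseteq \true{\phi}$. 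I would therefore use a non-image-finite model in which $[s]_a$ consists of $s$ together with an infinite family $w_1, w_2, \dots$ of worlds of strictly increasing modal ``depth'' (created, say, with two auxiliary agents whose accessibility relations one alternates), arranged so that $b$'s class is $\{s, v\}$ with $v$ the \emph{bisimulation limit} of the $w_n$: for every $k$ some $w_n$ is $k$-bisimilar to $v$, but $v$ is bisimilar to no single $w_n$. Then each $w_n$ is cut out of $a$'s class by the $\lang_0$-formula ``depth $\neq n$'', which $b$ knows because it holds at $s$ and at the infinitely deep $v$, giving (ii); any formula $a$ knows has some fixed finite modal depth $k$ and, being true at all sufficiently deep $w_n$, is true at their limit $v$, giving (iii); and taking $\phi$ to be a fresh atom true at $s$ and its bisimilar copies but false at every $w_n$ and at $v$ gives (i) and $[s]_a \not\subseteq \true{\phi}$. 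The one delicate point, which is the main obstacle, is engineering the $\sim_a$- and $\sim_b$-structure to be ``self-similar'' enough that $v$ genuinely is a bisimulation limit of $[s]_a$ --- i.e.\ that its neighbourhood replicates that of $s$ up to every finite depth --- after which (i)--(iii) are a routine verification.
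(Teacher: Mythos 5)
Your reduction to the maximal communication is sound (announcing more known formulas only shrinks the relations, and the target is evaluated statically), your conditions (i)--(iii) do characterise what a counterexample must provide, and the Hennessy--Milner observation that no image-finite model can work is essentially correct. The gap is that the counterexample itself is never constructed, and the model you sketch does not satisfy (iii) as described. Condition (iii) requires $v$ to satisfy \emph{every} formula of $K_a$, and $K_a$ is not confined to statements about the auxiliary ``depth'' agents: since the $\phi$-world $s$ lies in $[s]_a$ and $\sim_a$ is an equivalence relation, $\lozenge_a\phi$ holds throughout $[s]_a$, so $\lozenge_a\phi\in K_a$; likewise $\lozenge_a(\phi\wedge\lozenge_b\neg\phi)$ and arbitrarily deeply nested descriptions of the whole $\{a,b\}$-neighbourhood of $[s]_a$ belong to $K_a$. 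Your transfer argument therefore needs $w_n$ and $v$ to be $k$-bisimilar in the \emph{full} signature (agents $a$ and $b$ as well as the auxiliaries, and all occurring atoms including the fresh atom $\phi$), and with $[s]_b=\{s,v\}$, $v$'s $a$-class trivial and no $\phi$-world in $w_n$'s $b$-class, this fails already at $k=1$: the sketched $v$ falsifies $\lozenge_a\phi$. Repairing it forces you to graft onto $v$ an $a$-class containing copies of $s$ and of the $w_m$, then give those copies of $s$ a $b$-class containing a copy of $v$, and so on; this recursive ``self-similarity'' is exactly the step you defer, and it must be carried out while verifying that no new $\neg\phi$-world enters $[s]_a$ that $K_b$ cannot exclude, and that the excluding formulas $b$ is supposed to know remain true at $s$ and at the enriched $v$. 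That fixed-point construction is the substance of the proof, not a routine verification left for afterwards.

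For comparison, the paper sidesteps this engineering by using a homogeneous model: two grids of worlds $s_{x,y}$ and $t_{x,y}$ over $[0,1)\times[0,1)$, where $a$ knows the $x$-coordinate, $b$ knows the $y$-coordinate and the grid, atoms $p_i,q_i$ code the binary digits of $x$ and $y$, and $r$ codes a parity condition that flips between the two grids. Under $\setsimsingle$ it is $b$ who learns $r$ once $a$ announces all $\neg p_i$; the failure of $\setsimall$ comes not from bounded modal depth but from the fact that each individual formula known to $b$ mentions only finitely many atoms, so a $Q$-bisimulation over that finite atom set shows it still holds at a $\neg r$-world in $a$'s class (e.g.\ $t_{0,0}$), which therefore survives every possible communication. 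The grid's uniformity makes these bisimulations easy to exhibit, which is precisely the work your depth-based route still owes. If you want to pursue your route, you would need either an explicit self-similar model or a saturation-style argument producing a world realising $K_a\cup\{\neg\phi\}$ inside $[s]_b$; both are considerably more delicate than the atom-based argument.
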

\begin{proof} (\textit{Sketch; full proof in Appendix~\ref{appendix:proofs}}).
As in Proposition~\ref{prop:formula_set}, suppose $a$ knows whether $p_i$ is true and $b$ knows whether $q_i$ is true. Now, however, suppose that $b$ knows that whether $r$ is true depends on the parity of the number of indices $i$ such that $p_i$ and $q_i$ differ in value. Specifically, $b$ knows that $r$ is true if that number is even, while $a$ is uncertain whether $r$ holds if the number i\rev{s} even, or if it is odd. \rev{Both agents know $r$ is false if $p_i$ and $q_i$ differ infinitely often.} As before, suppose that all $p_i$ and $q_i$ happen to be false.

With the $\setsimsingle$ definition of distributed knowledge, we then have $D_{\{a,b\}}r$. This is because, when $a$ tells $b$ that all $p_i$ are false, agent $b$ will learn that there are 0 indices where $p_i$ and $q_i$ differ, so $r$ is true.

Yet $r$ is not distributed knowledge under the $\setsimall$ definition of distributed knowledge, since only $b$ can learn that $r$ is true. The reason $a$ can't learn this is that ``$r$ is true iff there is an even number of $i$ such that $p_i$ and $q_i$ disagree'' cannot be expressed in epistemic logic. Furthermore, while $b$ learns that $r$ is true once the communication is complete, $\setsimall$ requires simultaneous communication, so $b$ cannot simply say that $r$ is true.
\end{proof}



What does not make a difference, however, is simultaneous stat\rev{e}ments or $\omega$-sequential ones.
\begin{proposition}
$\M,s\models_\setseqsingleomega D_G\phi$ iff $\M,s\models_\setsimsingle D_G\phi$, and $\M,s\models_\setseqallomega D_G\phi$ iff $\M,s\models_\setsimall D_G\phi$
\end{proposition}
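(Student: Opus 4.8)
The plan is to reduce both biconditionals to a single statement about reachable information states, namely $\R{G}{s}{\setseqallomega} = \R{G}{s}{\setsimall}$. Note that the reachable set $\R{G}{s}{\tau}$ is defined by the same formula for $\tau = \setsimsingle$ and $\tau = \setsimall$, and likewise for $\tau = \setseqsingleomega$ and $\tau = \setseqallomega$; that is, $\R{G}{s}{\tau}$ does not depend on the component $q(\tau)$. Hence the displayed set equality also gives $\R{G}{s}{\setseqsingleomega} = \R{G}{s}{\setsimsingle}$, and since $\M,s\models_\tau D_G\phi$ is fully determined by $\R{G}{s}{\tau}$ together with $q(\tau)$, both stated equivalences follow at once (the ``single'' pair share $q(\tau)=\some$, the ``all'' pair share $q(\tau)=\all$).

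The inclusion $\R{G}{s}{\setsimall} \subseteq \R{G}{s}{\setseqallomega}$ is the simultaneous-implies-sequential direction already underlying Proposition~\ref{prop:basic_comparisons}: given a simultaneous set-communication $\{\Psi_a \mid a\in G\}$, enumerate the pairs $(a,\psi)$ with $\psi\in\Psi_a$ along $\omega$ and let agent $a$ announce $\psi$ at its slot; correctness holds trivially because $\M,s\models\square_a\psi$ already holds regardless of prior announcements, and the resulting final relation $\sim^{\omega,f,g}_b = \{(x,y)\in\sim_b \mid \M,y\models g(\delta) \text{ for all }\delta<\omega\}$ equals $\sim^{\{\Psi_a\mid a\in G\}}_b$ since the formulas announced are exactly $\bigcup_{a\in G}\Psi_a$.

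For the nontrivial inclusion $\R{G}{s}{\setseqallomega} \subseteq \R{G}{s}{\setsimall}$, take any correct $\omega$, $f$, $g$ for $G$ and $s$, and for each $\delta<\omega$ put $h(\delta) := \bigl(\bigwedge_{\epsilon<\delta} g(\epsilon)\bigr) \rightarrow g(\delta)$; this is a genuine formula of $\lang_0$ precisely because $\delta$ is finite, so the antecedent is a finite conjunction. Let $\Psi_a := \{ h(\delta) \mid \delta<\omega,\ f(\delta)=a \}$ for each $a\in G$ (adding $\top$ to any $\Psi_a$ that comes out empty). First I would check that each agent knows every formula it contributes: unfolding the definitions, $\bigcap_{\epsilon<\delta}\sim^{\omega,f,g,\epsilon}_a$ is exactly the set of pairs $(x,y)\in\sim_a$ with $\M,y\models g(\epsilon)$ for all $\epsilon<\delta$, so the correctness condition at stage $\delta$ (for the agent $a = f(\delta)$) says precisely that every $s'$ with $s\sim_a s'$ satisfies $\bigl(\bigwedge_{\epsilon<\delta}g(\epsilon)\bigr)\rightarrow g(\delta)$, i.e.\ $\M,s\models\square_a h(\delta)$. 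Thus $\{\Psi_a\mid a\in G\}$ is an admissible simultaneous set-communication. Second I would show the outcomes agree, i.e.\ $\sim^{\{\Psi_a\mid a\in G\}}_b = \sim^{\omega,f,g}_b$ for every $b\in G$; since both are obtained from $\sim_b$ by keeping pairs whose target satisfies a prescribed set of formulas, it suffices to show that for any world $y$, $\M,y\models h(\delta)$ for all $\delta<\omega$ if and only if $\M,y\models g(\delta)$ for all $\delta<\omega$. The ``if'' direction is immediate. For ``only if'', induct on $\delta$: $h(0)$ is equivalent to $g(0)$, and if $\M,y$ satisfies $g(\epsilon)$ for all $\epsilon<\delta$ then it satisfies the antecedent of $h(\delta)$, hence $g(\delta)$. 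This yields $\sim^{\omega,f,g}\in\R{G}{s}{\setsimall}$ and completes the set equality.

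The only genuinely delicate point is the restriction to $\omega$: the construction relies on each initial segment $\{g(\epsilon)\mid\epsilon<\delta\}$ being finite, so that $h(\delta)$ is an $\lang_0$-formula and a single later announcement can absorb the hypothetical reasoning of earlier speakers — exactly the device used in Propositions~\ref{prop:single}--\ref{prop:finite_sequence}, now iterated countably often. For transfinite $\alpha$ this breaks, since the antecedent would be an infinite conjunction, which is neither an $\lang_0$-formula nor expressible by a conjunctively-read set; this is why the proposition is stated for $\seq$ rather than $\ord$, and I would note this without pursuing it here.
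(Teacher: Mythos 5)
Your proof is correct and takes essentially the same approach as the paper, which replaces the sequential announcement of $\{\psi_i \mid i\in\mathbb{N}\}$ by the simultaneous announcement of $\{\bigwedge_{j<i}\psi_j\rightarrow\psi_i \mid i\in\mathbb{N}\}$; you have simply worked out in full the details (correctness of the translated communication, equality of the resulting information states, and the reduction of both biconditionals to one set equality) that the paper's two-sentence argument leaves implicit.
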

\begin{proof}
At every step in the $\omega$-sequential communication, when $\psi_i$ is stated, a finite set $\{\psi_j \mid j<i\}$ preceded it. We can 
replace sequential announcement of $\{\psi_i\mid i \in \mathbb{N}\}$ by simultaneous announcement of $\{\bigwedge_{j<i}\psi_j\rightarrow \psi_i\mid i\in \mathbb{N}\}$.
\end{proof}

The $\Omega$-sequential variant, on the other hand, is strictly weaker than $\omega$-sequential or simultaneous ones.

\begin{restatable}{proposition}{propomegaOmega}
$\M,s\models_\setseqsingleOmega D_G\phi$ does not necessarily imply $\M,s\models_\setsimsingle D_G\phi$.
\end{restatable}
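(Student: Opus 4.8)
The plan is to produce a single pointed model $\M,s$, a group $G$, and a formula $\phi$ (which will be an atom $r$) witnessing $\M,s\models_{\setseqsingleOmega}D_G\phi$ and $\M,s\not\models_{\setsimsingle}D_G\phi$. By the preceding proposition, $\M,s\models_{\setseqsingleomega}D_G\phi$ iff $\M,s\models_{\setsimsingle}D_G\phi$, so it suffices to build a model where $\omega+1$ rounds of set-communication achieve distributed knowledge of $r$ but $\omega$ rounds do not. The model uses countably many atoms $p_1,p_2,\dots$ together with two more atoms $q,r$ and three agents $d,c,c'$. Agent $d$'s information set at $s$ is a small set of ``all-$p$-false'' worlds, so $d$ can announce each $\neg p_k$. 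Agent $c'$ has in its information set a ``spine'' $u_1,u_2,\dots$, where $u_k$ is the unique world satisfying $p_k$ and carries a valuation making $q\leftrightarrow r$ false. Agent $c$, the eventual learner of $r$, has in its information set a world $s_2$ with all $p_i$ false, $q$ true, $r$ false --- this $s_2$ is exactly what keeps $c$ from knowing $r$. The accessibility relations are rigged so that $s_2$ is the ``$\lang_0$-limit'' of the spine: every $\lang_0$-formula true throughout $c'$'s information set (which contains $s$ and all of $u_1,u_2,\dots$) is also true at $s_2$; symmetrically a world $s_1$ (all $p_i$ false, $q$ false, $r$ false) lies in $d$'s and $c'$'s information sets and is the $\lang_0$-limit of a second spine $w_1,w_2,\dots$ inside $c$'s information set.

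For the positive direction, use the communication of length $\omega+1$ in which $d$ announces $\neg p_{k+1}$ at each round $k<\omega$ and $c'$ announces $q\leftrightarrow r$ at round $\omega$. Each of $d$'s announcements is correct, since $d$'s information set stays inside the all-$p$-false worlds; together they successively remove $u_1,u_2,\dots$ (and $w_1,w_2,\dots$) from everyone's information set. After these $\omega$ rounds $c'$'s information set consists only of all-$p$-false worlds, on all of which $q\leftrightarrow r$ holds, so $c'$ now knows $q\leftrightarrow r$ --- which it did not know at any finite stage, since every finite stage still left some $u_k$ in its information set. The announcement of $q\leftrightarrow r$ at round $\omega$ then deletes $s_2$, leaving $c$'s information set equal to $\{s\}$, so $c$ knows $r$. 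Hence $\M,s\models_{\setseqsingleOmega}D_Gr$.

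For the negative direction we must rule out every admissible simultaneous announcement, i.e.\ every choice of sets $\Psi_d,\Psi_c,\Psi_{c'}\subseteq\lang_0$ with each formula known at $s$ by the respective agent; after such an announcement $a$'s information set at $s$ becomes its old information set intersected with the set $T$ of worlds satisfying every announced formula. The crucial point is that $s_2\in T$ and $s_1\in T$ always: the formulas in $\Psi_d$ and $\Psi_c$ are true at $s_2$ because $s_2$ lies in $d$'s and $c$'s information sets, and those in $\Psi_{c'}$ are true at $s_2$ by the limit property; symmetrically (swapping $c$ with $c'$ and the two spines) $s_1\in T$. Consequently, after any admissible announcement the information sets of $d$, $c$ and $c'$ all still contain a world ($s_1$ or $s_2$) where $r$ is false, so none of the three knows $r$. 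Hence $\M,s\not\models_{\setsimsingle}D_Gr$, and we are done.

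The main obstacle is constructing and verifying this ``limit'' structure: arranging the equivalence relations so that $s_1$ and $s_2$ cannot be separated from their spines by \emph{any} $\lang_0$-formula known to any agent, while still keeping the spines removable one world at a time by $d$ and still forcing $c'$ (and hence $c$) to acquire the needed knowledge only in the limit. Concretely this requires that, for each fixed $\lang_0$-formula, $s_2$ and $u_k$ agree for all large $k$ (and likewise $s_1$ and $w_k$), which one obtains by making $u_k$ agree with $s_2$ on all atoms except $p_k$ and matching the local accessibility of $s_2$ and $u_k$ up to every finite modal depth; the bookkeeping needed to make all of this simultaneously consistent as an S5 model is the technical heart of the argument and is carried out in the appendix. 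Everything else is routine tracking of information sets.
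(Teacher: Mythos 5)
Your overall strategy is sound and runs parallel to the paper's: in the positive direction one agent must first make $\omega$ announcements before another agent's knowledge becomes expressible and usable, and in the negative direction the other agents can initially contribute nothing non-trivial, so no simultaneous (equivalently, $\omega$-sequential) communication establishes $r$. The bookkeeping of information states, the correctness of the length-$(\omega+1)$ sequence, and the survival argument for $s_1$ and $s_2$ under any admissible simultaneous announcement are all handled correctly \emph{given} the model you describe.

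The gap is that the model is never actually constructed, and its existence is exactly where the entire difficulty of the proposition lies. Everything hinges on the two ``limit'' properties: every $\lang_0$-formula true throughout $c'$'s information set must hold at $s_2$, and symmetrically for $c$ and $s_1$. You assert these can be arranged by letting $u_k$ agree with $s_2$ on all atoms except $p_k$ and ``matching the local accessibility of $s_2$ and $u_k$ up to every finite modal depth,'' deferring the verification to an appendix that does not exist. But this matching is not routine: to get, say, $u_k$ and $s_2$ to agree on all formulas avoiding $p_k$, one essentially needs a $Q$-bisimulation with $Q = \atoms\setminus\{p_k\}$, and since $u_k$ lies in $c'$'s cell of $s$ while $s_2$ lies in $c$'s and $d$'s cells, the forth/back conditions force $s_2$'s $c'$-cell to contain near-copies of $s$, $s_1$ and every $u_j$, force $u_k$'s $c$-cell to contain a near-copy of $s$, and so on recursively; one must exhibit an explicit (S5) model closing off this recursion, verify the bisimulations, and also check that none of the auxiliary worlds thereby introduced leaks into the cells of $s$ in a way that breaks the correctness of $c'$'s announcement at stage $\omega$ or leaves a $\neg r$-world in $c$'s final information state. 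The paper avoids this by writing down a concrete model (worlds $s_{e,f,g}$ with divisibility-based valuations for $x_j$, $y_j$, $z$) in which the inexpressibility of $b$'s and $c$'s private information is proved by a direct $Q$-bisimilarity argument. Your route could very likely be completed and would arguably be more modular, but as it stands the central existence claim is stipulated rather than proved, so the proof is incomplete.
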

\begin{proof} (\textit{Sketch; full proof in Appendix~\ref{appendix:proofs}}).
Suppose that $a$ knows, for all $i$ and $j$, whether $p_{i,j}$ and $q_{i,j}$ hold. Furthermore, suppose that the value of $x_i$ depends on the number of $j$ such that $p_{i,j}$ differs from $q_{i,j}$, in a way known to $b$ but not to $a$ and $c$, \rev{and that this dependence cannot be expressed in epistemic logic. (See the full proof in the appendix for one way to create such an inexpressible dependence.) Similarly, $y_i$ depends on the number of $j$ such that $p_{i,j}$ and $q_{i,j}$ differ, in an inexpressible way that is known to $c$ but not to $a$ and $b$.}
Finally, all three agents know that $z$ is true iff there is an even number of $i$ such that $x_i$ and $y_i$ differ.

Then if $z$ is true, that is distributed knowledge using the $\setseqsingleOmega$ definition, since $a$ can tell $b$ and $c$ which $p_{i,j}$ and $q_{i,j}$ \rev{hold}, at which point they can say which $x_i$ and $y_i$ are true, allowing all of them to determine that $z$ holds.

With the \rev{$\setsimsingle$} definition $z$ is not distributed knowledge, however. 
\rev{This is because, in order for any of the three agents to learn that $z$ is true, all three agents need to contribute their information. But $b$ and $c$ cannot initially contribute any non-trivial formulas, since their only private information is the way in which $x_i$ or $y_j$ depends on the values of $p_{i,j}$ and $q_{i,j}$, and this dependence is not expressible in epistemic logic. 

It is only after $a$ has informed $b$ and $c$ about the values of every $p_{i,j}$ and $q_{i,j}$ that $b$ and $c$ can apply their knowledge in order to determine the truth of $x_i$ and $y_j$, respectively, which they can then communicate to the other agents. So $a$ first needs to contribute at least $\omega$ formulas before $b$ and $c$ can get involved, which is not possible if \rev{$\tau = \setsimsingle$.}}
\end{proof}

However, if we allow any ordinal number of communication steps, the difference between a single agent learning the formula or all of them doing so disappears.

\begin{proposition}
$\M,s\models_\setseqsingleOmega D_G\phi$ 
if and only if $\M,s\models_\setseqallOmega D_G\phi$.
\end{proposition}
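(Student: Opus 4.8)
The plan is to prove the two implications separately. The direction $\M,s\models_\setseqallOmega D_G\phi \Rightarrow \M,s\models_\setseqsingleOmega D_G\phi$ is immediate from Proposition~\ref{prop:basic_comparisons} (the ``$\all$ implies $\some$'' clause, instantiated with $f=\yeslang$, $a=\multi$, $o=\ord$), so all the work is in the converse. For that, I would begin from a witness: a triple $\alpha,f,g$ that is correct for $G$ and $s$ together with an agent $x\in G$ such that $\M,t\models\phi$ for every $t$ with $s\sim^{\alpha,f,g}_x t$. Write $T_x=\{t\mid s\sim^{\alpha,f,g}_x t\}$, so $T_x$ contains only $\phi$-worlds, and let $\Psi_x=\{\psi\in\lang_0\mid \M,t\models\psi \text{ for all }t\in T_x\}$ be the set of $\lang_0$-formulas that agent $x$ has come to know after the communication. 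Since $\Psi_x\subseteq\lang_0$ it is a set, so we may fix an enumeration $(\psi_\gamma)_{\gamma<\beta}$ of it.

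The key idea is simply to let $x$ keep talking. I would define the extended triple $\alpha',f',g'$ with $\alpha'=\alpha+\beta$, where $f'$ and $g'$ agree with $f$ and $g$ on all $\delta<\alpha$, and $f'(\alpha+\gamma)=x$, $g'(\alpha+\gamma)=\psi_\gamma$ for $\gamma<\beta$. Since the $\ord$-regime allows sequences of any ordinal length and lets each agent contribute a set of formulas, $\sim^{\alpha',f',g'}$ will lie in $\R{G}{s}{\setseqallOmega}$ once we check it is correct. Correctness of the first $\alpha$ steps holds by hypothesis. For a step at $\delta=\alpha+\gamma$ one unwinds the definitions: because $\sim^{\alpha',f',g',\epsilon}_x=\sim^{\alpha,f,g,\epsilon}_x$ for $\epsilon<\alpha$ (the recursion for $\sim^{\cdot,\cdot,\cdot,\epsilon}$ depends only on $g$ up to index $\epsilon$), we have $\bigcap_{\epsilon<\alpha}\sim^{\alpha',f',g',\epsilon}_x=\sim^{\alpha,f,g}_x$, and the further intersections over $\alpha\le\epsilon<\delta$ only shrink the relation; hence any $s'$ with $(s,s')\in\sim_x\cap\bigcap_{\epsilon<\delta}\sim^{\alpha',f',g',\epsilon}_x$ already lies in $T_x$ and therefore satisfies $\psi_\gamma$ by the definition of $\Psi_x$. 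Conceptually this is nothing more than the observation that in the static setting announcements only shrink relations, so once $x$ knows a formula it keeps knowing it and may legitimately state it at any later stage.

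Finally, one computes the resulting information state: for any agent $b\in G$, from $s$ the relation $\sim^{\alpha',f',g'}_b$ reaches exactly $\{t\mid s\sim_b t,\ \M,t\models g(\delta)\text{ for }\delta<\alpha,\ \M,t\models\psi\text{ for all }\psi\in\Psi_x\}$, and every such $t$ satisfies all of $\Psi_x$; since $\phi$ (being in $\lang_0$) is itself true throughout $T_x$ we have $\phi\in\Psi_x$, hence $\M,t\models\phi$. So every agent in $G$ learns $\phi$, giving $\M,s\models_\setseqallOmega D_G\phi$. I expect the main obstacle to be purely bookkeeping — keeping the cumulative restrictions $\sim^{\cdot,\cdot,\cdot,\delta}$ straight across the ``seam'' at stage $\alpha$ when verifying correctness of the extension — rather than anything conceptually deep, the underlying reason the proposition holds being that the single successful agent $x$, given unboundedly many further turns, can broadcast its entire epistemic state to the rest of the group. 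The one point that deserves care is the step ``$\phi\in\Psi_x$'', which uses $\phi\in\lang_0$; for $\phi$ that nest the $D$-operator one would argue instead that the other agents' accessible sets get trimmed to within $T_x$ directly (or, as in the proof of Proposition~\ref{prop:finite_singleall}, have $x$ state a suitable formula built from $\phi$), but the structure of the argument is unchanged.
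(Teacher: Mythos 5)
Your argument is correct and rests on the same idea as the paper's proof: under the $\ord$ regime the successful agent can simply be given additional turns after stage $\alpha$ to broadcast what they have learned. The paper does this far more economically, appending a \emph{single} step at stage $\alpha$ in which agent $x$ announces $\phi$ itself; the correctness condition for that one step is literally the hypothesis that $\phi$ holds throughout $T_x$, so no enumeration of $\Psi_x$ and no bookkeeping across the seam is needed. Your heavier version (announcing the entire $\lang_0$-theory of $T_x$) still goes through, but one caution on your closing remark: the fallback you sketch for $\phi\notin\lang_0$ --- that announcing $\Psi_x$ trims the other agents' accessible sets to within $T_x$ --- is false in general, since a world outside $T_x$ that is $\lang_0$-equivalent to (e.g.\ bisimilar to) some world of $T_x$ satisfies every formula of $\Psi_x$ and therefore survives. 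Announcing $\phi$ directly, as the paper does, sidesteps the issue whenever $\phi$ is announceable; for genuinely nested $D$-operators both your proof and the paper's rely on the implicit assumption that the target formula can be replaced by an $\lang_0$-equivalent one.
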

\begin{proof}
Suppose that $\M,s\models_\setseqsingleOmega D_G\phi$. So there is a sequence of communication\rev{s} of length $\alpha$ after which one agent $a$ knows $\phi$. Now, consider the sequence of length $\alpha + 1$ where, in the last step, agent $a$ states that $\phi$ is true. This suffices for all agents to learn that $\phi$ is true.
\end{proof}
The above suffices to determine the comparative strength of each of the variants we discussed.

\section{Discussion}
\label{sec:conclusion}
We have considered 12 natural interpretations of the 
idea behind distributed knowledge. For these interpretations, we have analysed  which ones are equivalent to each other, and which ones are different. The complete landscape of distributed knowledge is shown in Figure \ref{fig:expressivity}.
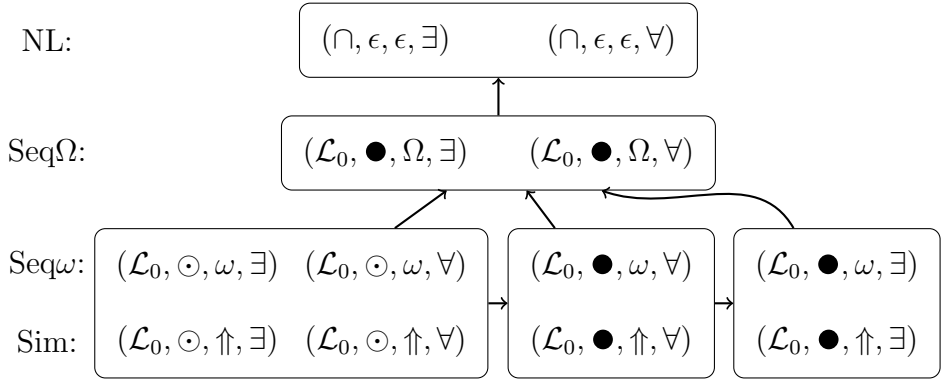
\begin{figure}[h!]
\centering
\rev{\begin{tikzpicture}[rounded corners,baseline,every node/.style={transform shape}]
\node (classic1) at (2.5,1) {$\nlsingle$};
\node (classicALL) at (5.5,1) {$\nlall$};

\node (seqOmega1) at (2.5,-0.5) {$\setseqsingleOmega$};
\node (seqOmegaALL) at (5.5,-0.5) {$\setseqallOmega$};

\node (seqomegasingle1) at (0, -2) {$\formulaseqsingle$};
\node (seqomegasingleALL) at (2.5,-2) {$\formulaseqall$};
\node (seqomegaset1) at (5.5,-2) {$\setseqallomega$};
\node (seqomegasetALL) at (8.5,-2) {$\setseqsingleomega$};

\node (simsingle1) at (0, -3) {$\formulasimsingle$};
\node (simsingleALL) at (2.5,-3) {$\formulasimall$};
\node (simsetALL) at (5.5,-3) {$\setsimall$};
\node (simset1) at (8.5,-3) {$\setsimsingle$};


\node (NL) [draw,  fit= (classic1) (classicALL)] {};
\node (seqO) [draw,  fit= (seqOmega1) (seqOmegaALL)] {};
\node (single) [draw,  fit= (seqomegasingle1) (simsingleALL)] {};
\node (setone) [draw,  fit= (seqomegasetALL) (simset1)] {};
\node (setall) [draw,  fit= (seqomegaset1) (simsetALL)] {};

\draw[thick,->] (single) -- (setall);
\draw[thick,->] (setall) -- (setone);
\draw[thick,->] (single) -- (seqO);
\draw[thick,->] (setone) to [out = 120, in = -20] (seqO);
\draw[thick,->] (setall) -- (seqO);
\draw[thick,->] (seqO) -- (NL);

\node (nl) at (-2, 1) {NL:};
\node (seqOmega) at (-2, -0.5) {Seq$\Omega$:};
\node (seqomega) at (-2, -2) {Seq$\omega$:};
\node (sim) at (-2, -3) {Sim:};
\end{tikzpicture}}
\caption{The expressivity landscape of distributed knowledge, where `NL' stands for `non-linguistic sharing', `Seq$\Omega$' denotes unlimited sequential sharing,  `Seq$\omega$' stands for sequential sharing limited to ordinal $\omega$, and `Sim' denotes simultaneous sharing.
\rev{Equivalent variations of distributed knowledge are enclosed in a box. Arrows point from stronger variants to weaker ones. Some arrows that follow from transitivity have been omitted for the sake of clarity.}
}
\label{fig:expressivity}
\end{figure}

Out of all variants of distributed knowledge, only the classic one (i.e. $\nlall$ and $\nlsingle$ in our taxonomy) was axiomatised \cite{fagin92,vanderhoek92}. The task of providing axiomatisations for the remaining variants seems to be both colossal and irresistibly tempting, and we thus leave it for future work.  

\rev{\subsection*{Acknowledgements}
A preliminary version of this work has benefited from the discussion with Fernando R. Vel{\'a}zquez-Quesada. We would also like to thank the anonymous reviewers of AiML for their attention to details and constructive criticism that helped us to improve the presentation of this work.}


\appendix
\section{Vicious Circularity}
\label{app:vicious}
Suppose that we define the type $(\lang,\single,\simult,\all)$ in the same way as the type $(\yeslang,\single,\simult,\all)$, except that we now allow $a\in G$ to contribute any formula $\psi_a\in \lang$ that they know, as opposed to any known formula from $\lang_0$.

Clearly, such a definition would be circular. Here, we show that this circularity is vicious. More specifically, it is under-determined, in the sense that there are models where both $D_G\phi$ and $\neg D_G\phi$ are consistent with the semantics.

\begin{proposition}
There are $\M,s$ and $\phi$ such that both $\M,s\models_{(\lang,\single,\simult,\all)}D_G\phi$ and $\M,s\models_{(\lang,\single,\simult,\all)}\neg D_G\phi$ are consistent with the $(\lang,\single,\simult,\all)$-semantics.
\end{proposition}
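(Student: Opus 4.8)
The plan is to exhibit a single pointed model $\M,s$ and a formula $\phi$ for which the would-be circular semantics admits two different ``solutions'', and hence neither $D_G\phi$ nor $\neg D_G\phi$ is forced. The natural candidate is a two-agent group $G=\{a,b\}$ together with a formula $\phi$ whose truth, at the relevant worlds, is equivalent to $D_G\phi$ itself, so that the self-reference bites. Concretely, I would take $\phi$ of the Moore-like form $p\wedge\neg\square_a D_G\phi$ — or, more cleanly, engineer a model in which the only obstruction to the agents learning $p$ is precisely whether one of them is allowed to contribute the formula $D_G p$ during the deliberation. So let $\phi = p$, put two worlds $s$ (where $p$ holds) and $t$ (where $p$ fails), with $s\sim_a t$ but $s\not\sim_b t$ (so $b$ already knows $p$, while $a$ does not), and arrange that the only $\lang$-formula $b$ can contribute that rules out $t$ is something equivalent to $D_{\{a,b\}}p$ at $t$ versus at $s$.

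The key steps, in order, would be: (1) describe the model $\M$ explicitly, including enough worlds and a valuation so that the only way for $a$ to come to know $p$ after simultaneous single-formula sharing is for $b$ to contribute a formula true at $s$ but false at $t$, and so that (modulo the $D_G$-free fragment and bisimulation, via Theorem~\ref{thm:equiv}) no formula of $\lang_0$ distinguishes $s$ from $t$ — this is what forces the circular operator to be the only available distinguisher. (2) Define the two candidate valuations of the semantics: call a truth-assignment to $D_G$-subformulas at each world a \emph{solution} if it satisfies the fixpoint condition given by the $(\lang,\single,\simult,\all)$ clause when the formulas $\psi_a,\psi_b$ are allowed to range over all of $\lang$ (evaluated according to that very assignment). (3) Verify that the assignment putting $\M,t\not\models D_G p$ (hence $b$ cannot usefully say $D_Gp$, since it is equally true at $s$ and $t$ — wait, here one must be careful; see below) is consistent, giving $\M,s\not\models D_Gp$; and (4) verify that the assignment putting $\M,t\models D_Gp$ but $\M,s\models D_G p$ in a \emph{different} way is also consistent, so that $b$ \emph{can} contribute the formula $\neg D_Gp$ (true at $s$, false at $t$), which rules out $t$ for $a$ and yields $\M,s\models D_Gp$. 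The model must be tuned so that $\neg D_G p$ indeed separates $s$ from $t$ under the second assignment but the overall assignment remains internally consistent.

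The main obstacle, and the step that needs the most care, is step (4) together with (3): making sure that each of the two proposed assignments genuinely satisfies the fixpoint equation at \emph{every} world simultaneously, not just at $s$. The danger is that forcing $\M,t\models D_Gp$ in the second solution triggers further constraints at $t$ (and at the auxiliary worlds used to block $\lang_0$-definability), which might propagate and collapse one of the two assignments, leaving a unique solution after all. I would therefore keep the model as small and symmetric as possible — likely three or four worlds, with $b$'s accessibility trivial enough that $b$ always knows $p$, and with a ``twin'' world arrangement guaranteeing $\lang_0$-indistinguishability of $s$ and $t$ — and then check the fixpoint condition by a short finite case analysis over the few worlds and the few relevant $\lang$-formulas (built from $p$ and $D_{\{a,b\}}p$). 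Once both assignments are checked to be solutions, the conclusion is immediate: the semantics is under-determined at $\M,s$, since under one solution $\M,s\models_{(\lang,\single,\simult,\all)}D_Gp$ and under the other $\M,s\models_{(\lang,\single,\simult,\all)}\neg D_Gp$, which is exactly what the proposition asserts.
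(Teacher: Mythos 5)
Your high-level strategy is the right one and matches the paper's: treat the circular clause as a fixpoint condition on the valuation of $D_G$-formulas and exhibit two distinct self-consistent solutions at the same pointed model. However, the concrete construction you sketch cannot work, and the gap is not merely a matter of ``tuning.'' You propose worlds $s$ (where $p$ holds) and $t$ (where $p$ fails) with $s\sim_a t$ and $s\not\sim_b t$, so that $b$ knows $p$. But then $b$ can contribute the formula $p$ itself, which lies in $\lang_0$, is known to $b$, and is false at $t$; this already excludes $t$ for $a$, so $D_{\{a,b\}}p$ is forced to hold at $s$ in \emph{every} candidate solution and no under-determination arises. Your simultaneous requirement that ``no formula of $\lang_0$ distinguishes $s$ from $t$'' is incompatible with $p$ differing at $s$ and $t$, since $p$ is such a formula. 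Relatedly, your step (4) asks $b$ to announce $\neg D_Gp$ as a formula true at $s$ and false at $t$, i.e.\ you need $D_Gp$ false at $s$ and true at $t$ --- but that contradicts both the conclusion you are aiming for at $s$ and the truthfulness of $D_G$ at $t$ (the reflexive world $t$ can never be excluded by formulas known there, so $D_Gp$ is false at $t$ in any solution).

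What is missing is the actual mechanism by which the circularity bites. The world to be excluded must \emph{not} be directly distinguishable from the actual world by any $\lang_0$-formula known to either agent individually, yet the \emph{intersection} of the agents' accessibility sets must isolate a $p$-world. The paper achieves this with a six-world model in which all $p$-worlds are mutually $\lang_0$-bisimilar and all $\neg p$-worlds are mutually $\lang_0$-bisimilar, and in which every world has, for each agent, at least one $p$-successor and one $\neg p$-successor; hence every $\lang_0$-formula known to either agent has extension equal to the whole domain and is uninformative. That yields the ``poor'' solution in which $D_{\{a,b\}}p$ fails. The ``rich'' solution arises because \emph{assuming} $D_{\{a,b\}}p$ holds exactly at the distinguished world breaks the bisimulation symmetry, makes every singleton definable in $\lang$, and thereby supplies the agents with non-trivial known formulas whose conjunction isolates that world --- confirming the assumption. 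Your sketch contains no analogue of this bootstrapping step, and with only two worlds (or any model in which one agent already knows a separating $\lang_0$-formula) it cannot be made to appear. You would need to redesign the model around bisimilar twins for \emph{both} truth values of $p$, accessible to \emph{both} agents from the evaluation world, before the two-fixpoint argument can go through.
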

\begin{proof}
Consider the following model:
\begin{center}\begin{tikzpicture}
\node[circle,fill=black,draw=black, minimum size=4pt,inner sep=0pt, label=left:{$p$}, label=above:{$s_1$}](s1) at (0,0) {};
\node[circle,fill=black,draw=black, minimum size=4pt,inner sep=0pt, label=right:{$p$}, label=above:{$s_2$}](s2) at (2,0) {};
\node[circle,fill=black,draw=black, minimum size=4pt,inner sep=0pt, label=right:{$p$}, label=above:{$s_3$}](s3) at (4,0) {};
\node[circle,fill=black,draw=black, minimum size=4pt,inner sep=0pt, label=below:{$t_1$}](t1) at (0,-2) {};
\node[circle,fill=black,draw=black, minimum size=4pt,inner sep=0pt, label=below:{$t_2$}](t2) at (2,-2) {};
\node[circle,fill=black,draw=black, minimum size=4pt,inner sep=0pt, label=below:{$t_3$}](t3) at (4,-2) {};

\draw (s1) -- (t1) node[midway,left] {$ab$};
\draw (s2) -- (t2) node[midway,right] {$a$};
\draw (s3) -- (t3) node[midway,right] {$ab$};

\draw (s1) -- (s2) node[midway,above] {$b$};
\draw (t1) -- (s2) node[midway,below right] {$b$};

\draw (s3) -- (t2) node[midway,above left] {$b$};
\draw (t2) -- (t3) node[midway,below] {$b$};
\end{tikzpicture}\end{center}
We will show that both $\M,s_2\models \dist_{\{a,b\}}p$ and $\M,s_2\not\models \dist_{\{a,b\}}p$ are consistent with the circular semantics. To see why this is the case, first note that the semantics are extensional, so while there are infinitely many formulas it suffices to consider only those that have different extensions. Let use denote the extension of $\phi$ by $\true{\phi}$.





Let $E$ be the set of all extension on this model. In order for $E$ to be consistent with the $(\lang,\single,\simult,\all)$-semantics, it has to be ``self-fulfilling'', in the sense that, if we assume that $E$ is the set of all extensions, then we should have $\true{\phi}\in E$ for all $\phi\in \lang$, and for every $e\in E$ there should be some $\phi_e\in \lang$ such that $\true{\phi_e}=e$.

We will show that there are two different sets of extensions that satisfy this criterion: we can take $E=2^S$, in which case we have $\M,s_2\models D_{\{a,b\}}p$, and we can take $E=\{\emptyset, \{s_1,s_2,s_3\},\{t_1,t_2,t_3\},S\}$, in which case $\M,s_2\not\models D_{\{a,b\}}p$.

Suppose therefore that $E=2^S$. It is immediate that $\true{\phi}\in E$ for all $\phi\in \lang$. Left to show is that every extension $e\in E$ is witnessed by some formula $\phi_e$. To this purpose, we first note that we have $\M,s_2\models D_{\{a,b\}}p$. This is because, by assumption, $E=2^S$ is the set of extensions, so there are formulas $\phi_1$ and $\phi_2$ such that $\true{\phi_1}=\{s_2,t_2\}$ and $\true{\phi_2}= \{s_1,s_2,t_1\}$. Then $\M,s_2\models \square_a\phi_1$ and $\M,s_2\models \square_b\phi_2$, so the agents can share $\phi_1$ and $\phi_2$. Furthermore, by putting $\phi_1$ and $\phi_2$ together, the agents discover that $s_2$ is the only possible world. As $p$ is true there, we have $\M,s_2\models D_{\{a,b\}}p$.

In every other \rev{world}, $D_{\{a,b\}}p$ is false. For $t_1, t_2$ and $t_3$ this follows from the fact that distributed knowledge is truthful and $p$ is false in $t_1, t_2$ and $t_3$. For $s_1$ and $s_3$ it follows from the fact that there is an $ab$-successor ($t_1$ or $t_3$, respectively) where $p$ is false. Since this \rev{world} is an $ab$-successor, it can never be excluded by any formula known to $a$ or $b$, so the agents cannot exclude the possibility of $\neg p$ by combining their information.

We have now shown that the formula $D_{\{a,b\}}p$ uniquely identifies the \rev{world} $s_2$. 
It follows that there are also formulas uniquely identifying every other \rev{world}:
\begin{description}
	\item[$s_1$:] $p\wedge \neg \dist_{\{a,b\}}p\wedge \lozenge_b\dist_{\{a,b\}}p$
	\item[$s_3$:] $p\wedge \neg \dist_{\{a,b\}}p\wedge \neg\lozenge_b\dist_{\{a,b\}}p$
	\item[$t_1$:] $\neg p \wedge \lozenge_b\dist_{\{a,b\}}p$
	\item[$t_2$:] $\neg p \wedge \lozenge_a\dist_{\{a,b\}}p$
	\item[$t_3$:] $\neg p \wedge \neg \lozenge_b\dist_{\{a,b\}}p\wedge \neg \lozenge_a\dist_{\{a,b\}}p$.
\end{description}
Let us denote the formula for any \rev{world} by $\phi_{s_i}$ or $\phi_{t_i}$. Any $e\in E$ is the extension of some disjunction of the relevant $\phi_{s_i}$ and/or $\phi_{t_i}$.

We have now shown that $E=2^S$ being the set of extensions is consistent with the semantics, and that we then have $\M,s_2\models D_{\{a,b\}}p$. The witnessing formulas for this distributed knowledge had to have extensions $\{s_2,t_2\}$ and $\{s_1,t_1,s_2\}$, so we can take $\phi_a = \{\phi_{s_2}\vee\phi_{t_2}\}$ and $\phi_b = \{\phi_{s_1}\vee \phi_{s_2}\vee \phi_{t_1}\}$.


Next, we will show 
that it is consistent with the semantics to have $E=\{\emptyset,$ $\{s_1,s_2,s_3\},$ $\{t_1,t_2,t_3\},$ $S\}$, in which case $\M,s_2\not\models D_{\{a,b\}}p$. In this case, it is easy to see that for every $e\in E$ there is some $\phi$ such that $e=\true{\phi}$; we have $\emptyset = \true{\bot}$, $\{s_1,s_2,s_3\}=\true{p}$, $\{t_1,t_2,t_3\}=\true{\neg p}$ and $S=\true{\top}$.

Left to show, therefore, is that for every $\phi$, we have $\true{\phi}\in E$.
Because the $s_i$ are bisimilar to each other, as are the $t_j$, the bisimulation-invariance of modal logic implies that every $\psi\in\lang_0$ will have one of the four extentions in $E$. 

Furthermore, the $D_G$ operator cannot break this symmetry. This is because, in every \rev{world}, both $a$ and $b$ consider at least one $t_i$ \rev{world} and at least one $s_i$ \rev{world} possible. It follows that, for every \rev{world} $x$, if $\M,x\models \square_a\phi$ or $\M,x\models \square_b\phi$, then $\true{\phi}$ contains at least one $t_i$ and at least one $s_i$. Among the four extensions in $E$, the only one with this property is $S$. 

In their communication, $a$ nor $b$ can therefore only contribute formulas with extension $S$, so neither of them provides non-trivial information. It follows that $\phi$ is distributed knowledge if and only if one of the agents already knew $\phi$ before the agents started sharing information. Hence $\M,x\models D_{\{a,b\}}\phi$ if and only if $\M,x\models \square_a\phi \vee \square_b\phi$. It follows that, for every $\phi\in \lang$ there is a $\phi_0\in \lang_0$ such that $\true{\phi}=\true{\phi_0}$, which implies that $\true{\phi}\in E$.

Furthermore, because $\M,s_2\not\models \square_ap \vee \square_bp$, we have $\M,s_2\not\models D_{\{a,b\}}p$.
\end{proof}


\section{Full versions of proofs}
\label{appendix:proofs}
\propFormulaSet*
\begin{proof}
Let $\M=(S,\sim,V)$, where $S$ and $\sim$ are given as follows.
\begin{itemize}
	\item $S= \{s_{x,y}\mid (x,y)\in [0,1)\times [0,1)\}$,
	\item $\sim_a = \{(s_{x,y},s_{x,y'})\}$
	\item $\sim_b = \{(s_{x,y},s_{x',y})\}$
\end{itemize}
Each $x$ or $y$ can be interpreted, written in binary notation, as a set of natural numbers. Let us write $\ats{x}$ for the set of natural numbers represented by $x$. Then we take $V(p_i) = \{s_{x,y}\mid i\in \ats{x}\}$, $V(q_i) = \{s_{x,y}\mid i\in \ats{y}\}$ and $V(r) = \{s_{x,y}\mid x=y\}$.

In every \rev{world}, $a$ knows the $x$-coordinate while being uncertain about the $y$-coordinate, while $b$ is uncertain about the $x$-coordinate while knowing the $y$-coordinate. Since the value of $p_i$ depends only on the $x$-coordinate, this means that $a$ knows, for every $i$, whether $p_i$ is true. Similarly, $b$ knows whether $q_i$ is true.

Furthermore, it is true throughout the model, and therefore known to both agents, that $r$ holds if and only if $x=y$, and therefore if and only if $p_i \leftrightarrow q_i$ for all $i$.

We have $\M, s_{0,0}\models_\setsimall D_{\{a,b\}}r$, which is witnessed by the sets $\Psi_a = \{\neg p_i\mid i\in \mathbb{N}\}$ and $\Psi_b = \{\neg q_i\mid i\in \mathbb{N}\}$. After all, the only \rev{world} where all $p_i$ and $q_i$ are false is $s_{0,0}$, where $r$ is true.

Suppose now, towards a contradiction, that there are formulas $\psi_a$ and $\psi_b$ that are known by their respective agents in $s_{0,0}$, and that would allow the agents to learn that $r$ is true. Let $Q$ be the set of atoms that occur in $\psi_a$ and $\psi_b$. Note that this is a finite set.

Because $\psi_a$ is known by $a$, it must hold in every $s_{0,y}$. Similarly, $\psi_b$ holds on every $s_{x,0}$. Let $x$ and $y$ be such that $x\not = 0$, $y\not = 0$, $\ats{x}\cap Q = \ats{y}\cap Q = \emptyset$ and $x\not = y$. Now, consider the \rev{world} $s_{x,y}$. We will show that it is $Q$-bisimilar to both $s_{x,0}$ and $s_{0,y}$.

To this end, consider the relation $\approx \subseteq S\times S$, such that $s_{x,y} \approx s_{x',y'}$ iff $s_{x,y}$ and $s_{x',y'}$ agree on $Q\cup \{r\}$. We claim that this relation is a bisimulation. Atomic agreement (when restricted to $Q$) between any two $\approx$-related \rev{worlds} is immediate from the construction. We show forth for $b$, the other cases can be shown similarly.

So suppose that $s_{x,y}\approx s_{x',y'}$ and that $s_{x,y}\sim_b s_{u,v}$. Then $y=v$. Now, let $v' = y'$ and
\begin{itemize}
    \item if $u=v$ then $u'=v'$,
    \item if $u\not = v$ then $u'$ is any number such that  $u'\not = v'$ and $\ats{u'}\cap Q = \ats{u}\cap Q$.
\end{itemize}
Then $s_{u,v}$ and $s_{u',v'}$ agree on all atoms in $Q\cup \{r\}$. Hence $s_{u,v}\approx s_{u',v'}$ by our assumption. Furthermore, because $y'=v'$, we also have $s_{x',y'}\sim_b s_{u',v'}$ by the construction of the model. So the forth condition is satisfied.


From this bisimilarity and the fact that $\M,s_{x,0}\models \psi_b$, it follows  that $\M,s_{x,y}\models \psi_b$, and therefore also that $\M,s_{0,y}\models \psi_b$. For the same reason, $\M,s_{x,0}\models \psi_a$.

We can thus conclude that $\psi_a$ and $\psi_b$ cannot exclude the \rev{worlds} $s_{x,0}$ and $s_{0,y}$. In both these \rev{worlds} $r$ is false, so neither $a$ nor $b$ learns that $r$ is true. This contradicts our assumptions, so we have that $\M,s\not\models_\formulasimall D_{\{a,b\}}r$.
\end{proof}

\propsomeall*
\begin{proof}
Let 
\begin{itemize}
    \item $S= \{s_{x,y}\mid (x,y)\in [0,1)\times [0,1)\}\cup \{t_{x,y}\mid (x,y)\in [0,1)\times [0,1)\}$,
    \item $\sim_a = \{(u_{x,y},v_{x,y'})\mid u,v\in \{s,t\}, x,y,y'\in [0,1)\}$
    \item $\sim_b = \{(s_{x,y},s_{x',y})\mid x,x',y\in [0,1)\}\cup\{(t_{x,y},t_{x',y})\mid x,x',y\in [0,1)\}$
    \item $V(p_i) = \{s_{x,y}\mid i \in \ats{x}\}\cup \{t_{x,y}\mid i \in \ats{x}\}$
    \item $V(q_i) = \{s_{x,y}\mid i \in \ats{y}\}\cup \{t_{x,y}\mid i \in \ats{y}\}$
    \item $V(r) = \{s_{x,y}\mid |\ats{x}\triangle \ats{y}| \text{ is even}\}\cup \{t_{x,y}\mid |\ats{x}\triangle \ats{y}| \text{ is odd}\}$\footnote{For sets $X$ and $Y$, the symmetric difference $X \triangle Y$ is defined as $(X\setminus Y) \cup (Y \setminus X)$.}
\end{itemize}
Essentially, the model consists of two grids, one with $s_{x,y}$ \rev{worlds} and another one with $t_{x,y}$ \rev{worlds}. Agent $a$ does not know which grid the current \rev{world} belongs to, while agent $b$ does. On $s_{x,y}$ \rev{worlds} $r$ is true if the number atoms on which $x$ and $y$ differs is even, whereas on $t_{x,y}$ \rev{worlds} $r$ is true if that number is odd. As before, $a$ knows the value of the $p_i$ atoms, while $b$ knows the value of $q_i$. Finally, $b$ can tell the difference between the $s$ and $t$ \rev{worlds}, and therefore the required parity, while $a$ does not.

We have $\M,s_{0,0}\models_\setsimsingle D_{\{a,b\}}r$ because, as in the previous proof, $a$ can announce which $p_i$ hold and $b$ can announce which $q_i$ hold, which suffices for $b$ to determine that $r$ holds.

There is no way for $a$ to learn that $r$ is true by simultaneous statements, however. This follows from another $Q$-bisimilarity argument.
\end{proof}

\propomegaOmega*
\begin{proof}
Let 
\begin{align*}S = \{s_{e,f,g}\mid {} & {} 
 e:\mathbb{N}\times \mathbb{N}\times \{0,1\}\rightarrow \{0,1\}\\
&f: \mathbb{N}\rightarrow \mathbb{N}\\
&g: \mathbb{N}\rightarrow \mathbb{N}\}
\end{align*}
and
\begin{itemize}
    \item $s_{e,f,g}\sim_a s_{e',f',g'}$ iff $e=e'$,
    \item $s_{e,f,g}\sim_b s_{e',f',g'}$ iff $f = f'$,
    \item $s_{e,f,g}\sim_c s_{e',f',g'}$ iff $g = g'$.
\end{itemize}
In other words, in $s_{e,f,g}$ agent $a$ knows $e$, agent $b$ knows $f$ and agent $c$ knows $g$.

Furthermore, for $i,j\in \mathbb{N}$ let 
\begin{itemize}
    \item $V(p_{i,j}) = \{s_{e,f,g}\mid e(i,j,0)=1\}$, 
    \item $V(q_{i,j}) = \{s_{e,f,g}\mid e(i,j,1)=1\}$,
    \item $V(x_{j}) = \{s_{e,f,g}\mid$ the number of indices $i\in \mathbb{N}$ s.t.\ $e(i,j,0)\not = e(i,j,1)$ is divisible by $f(j)\}$
    \item $V(y_{j}) = \{s_{e,f,g}\mid$ the number of indices $i\in \mathbb{N}$ s.t.\ $e(i,j,0)\not = e(i,j,1)$ is divisible by $g(j)\}$
    \item $V(z) = \{s_{e,f,g}\mid$ there is an even number of indices $j\in \mathbb{N}$ s.t.\ exactly one of $x_j, y_j$ holds on $s_{e,f,g} \}$
\end{itemize}

So $e$ simply determines which $p_{i,j}$ and $q_{i,j}$ hold. The function $f$, meanwhile, determines for each $j$ the number, $f(j)$, that must divide the amount of indices $i$ on which $p_{i,j}$ and $q_{i,j}$ are different, in order for $x_j$ to be true. Similarly, $g$ determines the number of indices that must be different for $y_j$ to be true. Finally, $z$ holds if and only if $x_j$ and $y_j$ differ in value for an even number of indices $j$.

It is now easy to see that for any $s_{e,f,g}$ such that $\M,s_{e,f,g}\models z$ we have $\M,s_{e,f,g}\models_\setseqsingleOmega D_{\{a,b,c\}}z$. The communication the agents can perform is as follows:
\begin{itemize}
    \item In the time steps before $\omega$, agent $a$ tells the other two exactly which $p_{i,j}$ and $q_{i,j}$ hold.
    \item At time $\omega$, agents $b$ and $c$ know which $p_{i,j}$ and $q_{i,j}$ hold. As a consequence, $b$ knows which $x_j$ hold while $c$ knows which $y_j$ hold.
    \item In the time steps between $\omega$ and $2\times \omega$, agents $b$ and $c$ tell the other two which $x_j$ and $y_j$ hold.
    \item At time $2\times \omega$, all agents know exactly which $x_j$ and $y_j$ hold, and therefore whether $z$ holds.
    \item We assumed we were in a \rev{world} where $z$ is true, so the agents learn that $z$ is true.
\end{itemize}
Now, to show that $\M,s_{e,f,g}\not\models_\setsimsingle D_{\{a,b,c\}}z$.

The key observation here is that $b$ and $c$ cannot provide any non-trivial announcements. Suppose towards a contradiction that $\M,s_{e,f,g}\models \square_b\psi_b$ and $\M,s_{e',f',g'}\not\models \psi_b$. Let $Q$ be the set of atoms in $\psi_b$. Then there is some $s_{e'',f,g''}$ that is $Q$-bisimilar to $s_{e',f',g'}$. But, by $\M,s_{e,f,g}\models \square_b\psi_b$, we have $\M,s_{e'',f,g''}\models \psi_b$, which by bisimilarity implies $\M,s_{e',f',g'}\models \psi_b$, contradicting our assumption. 

Hence the only announcements $b$ can provide hold in every \rev{world} of the model, and are therefore uninformative. That $c$ cannot make a non-trivial announcement is shown similarly.

This means that only $a$ can provide information, so after the communication all \rev{worlds} of the form $s_{e,f',g'}$ are still accessible. Each agent then still considers both $z$ \rev{worlds} and $\neg z$ \rev{worlds} to be possible, so $z$ is not distributed knowledge.
\end{proof}

\bibliographystyle{plain-alt}
\bibliography{references}

\begin{thebibliography}{10}

\bibitem{agotnes22}
Thomas {\AA}gotnes, Natasha Alechina, and Rustam Galimullin.
\newblock Logics with group announcements and distributed knowledge:
  Completeness and expressive power.
\newblock {\em Journal of Logic, Language and Information}, 31(2):141--166,
  2022.

\bibitem{agotnes10}
Thomas {\AA}gotnes, Philippe Balbiani, Hans van Ditmarsch, and Pablo Seban.
\newblock Group announcement logic.
\newblock {\em Journal of Applied Logic}, 8(1):62--81, 2010.

\bibitem{galimullin23b}
Thomas {\AA}gotnes and Rustam Galimullin.
\newblock Quantifying over information change with common knowledge.
\newblock {\em Autonomous Agents Multi-Agent Systems}, 37:40, 2023.

\bibitem{AgotnesWang2017}
Thomas {\AA}gotnes and Y{\`{\i}}~N. W{\'{a}}ng.
\newblock Resolving distributed knowledge.
\newblock {\em Artificial Intelligence}, 252:1--21, 2017.

\bibitem{Baltag2010slides}
Alexandru Baltag.
\newblock What is {DEL} good for?
\newblock {Workshop on Logic, Rationality and Intelligent Interaction}, 2010.

\bibitem{baltag20}
Alexandru Baltag and Sonja Smets.
\newblock Learning what others know.
\newblock In Elvira Albert and Laura Kov{\'{a}}cs, editors, {\em Proceedings of
  the 23rd {LPAR}}, volume~73 of {\em EPiC Series in Computing}, pages 90--119.
  EasyChair, 2020.

\bibitem{del}
Hans van Ditmarsch, Wiebe van~der Hoek, and Barteld Kooi.
\newblock {\em Dynamic Epistemic Logic}, volume 337 of {\em Synthese Library}.
\newblock Springer, 2008.

\bibitem{fagin95}
Ronald Fagin, Joseph~Y. Halpern, Yoram Moses, and Moshe Vardi.
\newblock {\em Reasoning About Knowledge}.
\newblock The MIT Press, 1995.

\bibitem{fagin92}
Ronald Fagin, Joseph~Y. Halpern, and Moshe~Y. Vardi.
\newblock What can machines know? {On} the properties of knowledge in
  distributed systems.
\newblock {\em Journal of the {ACM}}, 39(2):328--376, 1992.

\bibitem{galimullin23}
Rustam Galimullin and Fernando~R. Vel{\'{a}}zquez{-}Quesada.
\newblock {(Arbitrary)} partial communication.
\newblock In Noa Agmon, Bo~An, Alessandro Ricci, and William Yeoh, editors,
  {\em Proceedings of the 22nd {AAMAS}}, pages 400--408. {ACM}, 2023.

\bibitem{gerbrandy99}
Jelle Gerbrandy.
\newblock {\em Bisimulations on Planet {Kripke}}.
\newblock PhD thesis, University of Amsterdam, 1999.

\bibitem{halpern90}
Joseph~Y. Halpern and Yoram Moses.
\newblock Knolwedge and common knolwedge in a distributed environment.
\newblock {\em Journal of the Association for Computing Machinery},
  37:549--587, 1990.

\bibitem{vanderhoek99}
Wiebe van~der Hoek, Bernd van Linder, and John-Jules Meyer.
\newblock Group knowledge is not always distributed (neither is it always
  implicit).
\newblock {\em Mathematical Social Sciences}, 38(2):215--240, 1999.

\bibitem{vanderhoek92}
Wiebe van~der Hoek and John{-}Jules~Ch. Meyer.
\newblock Making some issues of implicit knowledge explicit.
\newblock {\em International Journal of Foundations of Computer Science},
  3(2):193--223, 1992.

\bibitem{humberstone85}
I.~L. Humberstone.
\newblock The formalities of collective omniscience.
\newblock {\em Philosophical Studies}, 48:401--423, 1985.

\bibitem{meyer_1995}
John{-}Jules~Charles Meyer and Wiebe van~der Hoek.
\newblock {\em Epistemic logic for {AI} and computer science}, volume~41 of
  {\em Cambridge tracts in theoretical computer science}.
\newblock CUP, 1995.

\bibitem{roelofsen07}
Floris Roelofsen.
\newblock Distributed knowledge.
\newblock {\em Journal of Applied Non-Classical Logics}, 17(2):255--273, 2007.

\end{thebibliography}

\end{document}